\author{Ge Yao\inst{1} \and Udaya Parampalli\inst{1}}
\institute{The University of Melbourne, Australia\\ \email{gyao1@student.unimelb.edu.au}} 
\title[\texttt{iacrtrans} class documentation]{Generalized NLFSR Transformation Algorithms and Cryptanalysis of the Class of Espresso-like Stream Ciphers}
\begin{document}
\maketitle

\keywords{Transformation Algorithm \and Espresso Cipher \and Galois NLFSR \and Fibonacci LFSR \and Algebraic Attack.}

\begin{abstract}
  Lightweight stream ciphers are highly demanded in IoT applications. In order to optimize the hardware performance, a new class of stream cipher has been proposed. The basic idea is to employ a single Galois NLFSR with maximum period to construct the cipher. As a representative design of this kind of stream ciphers, Espresso is based on a 256-bit Galois NLFSR initialized by a 128-bit key. The $2^{256}-1$ maximum period is assured because the Galois NLFSR is transformed from a maximum length LFSR. However, we propose a Galois-to-Fibonacci transformation algorithm and successfully transform the Galois NLFSR into a Fibonacci LFSR with a nonlinear output function. The transformed cipher is broken by the standard algebraic attack and the R\o njom-Helleseth attack with complexity $\mathcal{O}(2^{68.44})$ and $\mathcal{O}(2^{66.86})$ respectively. The transformation algorithm is derived from a new Fibonacci-to-Galois transformation algorithm we propose in this paper. Compare to existing algorithms, proposed algorithms are more efficient and cover more general use cases. Moreover, the transformation result shows that the Galois NLFSR used in any Espresso-like stream ciphers can be easily transformed back into the original Fibonacci LFSR. Therefore, this kind of design should be avoided in the future.  
\end{abstract}

\section{Introduction}
\subsection{Background}
The Internet of Things (IoT) is a new technology paradigm envisioned as a global network of devices capable of interacting with each other \cite{lee2015internet}. The devices in IoT usually have limited computing resources and strict power constraints, hence reliable connections with highly energy efficient communication technology such as 5G are in need. In the various study of security solutions for 5G network, a new class of lightweight stream ciphers has been proposed \cite{DubrovaH17}. We refer to it as the Espresso-like ciphers since Espresso is the representative design. The basic idea of this new design is to employ a Nonlinear Feedback shift Register (NLFSR) in Galois configuration as the building block while most of the known stream ciphers are based on NLFSRs in Fibonacci configuration. By using a Galois NLFSR, the feedback functions of the cipher are much smaller compared to those based on Fibonacci NLFSRs, thereby leading to a lower propagation delay in implementation which is preferred in 5G wireless communication systems. The hardware performance analysis \cite{DubrovaH17} shows that Espresso is the fastest among the stream ciphers below 1500 GE.

Due to the fact that there is a lack of cryptanalysis being carried out on Galois NLFSR based stream ciphers, the security analysis of this kind of cipher is conducted on a transformed NLFSR F which resembles a Fibonacci NLFSR. The analysis result in~\cite{DubrovaH17} shows that the cipher is resistant to all known attacks. However, we find that the transformed NLFSR F is not equivalent to the original Galois NLFSR G in the cipher unless the output function of F is changed accordingly. There is a possibility that the security of the original NLFSR G based cipher is different from the so called equivalent NLFSR F based cipher. It is important to reevaluate the validity of this new design methodology for all kinds of potential weakness. 

\subsection{Related Work.} 
\subsubsection{Transformation Algorithms}
The Galois NLFSR G used in Espresso is transformed from a NLFSR F by applying the transformation algorithm proposed by Dubrova in 2009~\cite{Dubrova09}. This algorithm is the first successful attempt to transform Fibonacci NLFSR into Galois NLFSR. How to find the matching initial states for NLFSRs before and after transformation is presented in~\cite{Dubrova10}. Based on this algorithm, the author developed a method of constructing $n$-stage Galois NLFSRs with period $2^n - 1$ from $n$-stage maximum length LFSRs~\cite{Dubrova13}. This method laid the foundation for the design of the class of Espresso-like ciphers. The NLFSR F is constructed from a 256-stage maximum length LFSR. Then the transformation algorithm~\cite{Dubrova09} is used to convert F into the Galois NLFSR G. Therefore, the period of the output sequence of G is $2^{256} - 1$, which provides very good statistical properties for the cipher. Later, the author extend the transformation algorithm to handle Galois-to-Galois case in~\cite{Dubrova14}. This algorithm is the generalized version of the Fibonacci-to-Galois transformation algorithm~\cite{Dubrova09}. The equivalence between the two NLFSRs before and after transformation is hold if and only if they both are "uniform". The definition of "uniform" is described in Section 2. 

For the Galois-to-Fibonacci case, there are only a few results have been published. In 2013, Lin~\cite{Lin13} proposed a transformation from a Galois NLFSR to a Fibonacci NLFSR. This algorithm targets at Galois NLFSRs more general than the "uniform" case and studied the properties of the output sequences of all the bits in the Galois NLFSRs. Another algorithm~\cite{LuLHLC18} is proposed by using a mathematical tool named the semi-tensor product of matrices. However, the complexity of the proposed algorithm is $\mathcal{O}(2^n)$, where $n$ is the length of the NLFSR. This method is not applicable in stream ciphers since the size of the NLFSR is usually larger than 80-bit due to security concerns. Besides, the common problem in all the discussed algorithms is that the output function of an NLFSR is assumed to only tap from the 0th bit, which is infeasible in stream ciphers where the output takes multiple bits from the NLFSR. Furthermore, it is pointed out by the author in~\cite{Dubrova14} that the sequence of states of the two NLFSRs before and after transformation differ in several bit positions. How to efficiently and correctly transform more generalized NLFSRs with output function taken arbitrary taps from the NLFSR remain unsolved. 

\subsubsection{Security Attacks}

Shortly after the Espresso being published, a related key chosen-IV attack is proposed by Wang et.al~\cite{WangL17}. This attack is mounted on a variant of Espresso cipher denoted by Espresso-a. Similar as the transformation between NLFSR G and F described in Espresso, this variant is transformed from G by using the algorithm in~\cite{Dubrova09}. This attack recovers the 128-bit secret key with complexity $\mathcal{O}(2^{64})$. However, the output function of the variant Espresso-a is the same as that of Espresso. This causes the same issue of the security analysis in the Espresso~\cite{DubrovaH17} as we mentioned before. In another published paper \cite{ZWespresso}, an algebraic attack is mounted on Espresso cipher. Based on the fact that the Galois NLFSR used in Espresso cipher is constructed from a LFSR, the authors believe there must be a bijection between the NLFSR and the LFSR. They use an unknown mathematical software to find the filter function of the LFSR and present it only by parameters instead of the exact function, hence making it difficult to verify the result. Their method works only when the original LFSR is known to the attacker. 

In this paper, we propose a new and original transformation algorithm which is able to convert the whole class of Espresso-like cipher into LFSR filter generators and the proposed algorithm works even without knowing the original LFSR. Then we use algebraic attacks to break the transformed Espresso cipher. Algebraic attacks are very powerful to cryptanalyse LFSR-based stream ciphers. The main idea is to solve a system of algebraic equations between key bits and output bits. The core of the attack is to reduce the degree of the equations so that linearization method or XL method can be used to solve these equations efficiently. In 2003, Courtois et.al~\cite{CourtoisM03} proposed a general algebraic attack. They multiply the algebraic equations by well-chosen multivariate polynomials to lower the degree of these equations. Several subsequent researches have been published to speed up the solving process~\cite{Courtois03,HawkesR04,RonjomH07}. The attack in \cite{RonjomH07} is the most efficient attack among existing algebraic attacks.  Meanwhile, several approaches have been proposed to evaluate the ability of the output functions against algebraic attack which is denoted as the algebraic immunity~\cite{MeierPC04,DalaiGM04}. How to construct such high algebraic immunity Boolean functions is presented in~\cite{ArmknechtCGKMR06,CarletDGM06,CarletF08}.

\subsection{Our Contributions.}

The contribution of this paper is listed below.

1. We point out that the common problem in existing transformation algorithms is the output function is assumed to be $f_z = x_0$. However, in real NLFSR-based stream ciphers, the output function usually takes multiple taps from the NLFSR. In order to solve this problem, we develop an idea of compensating the output function and feedback functions of the NLFSR during transformation.

2. Based on the compensation idea, we propose a Fibonacci-to-Galois NLFSR transformation algorithm and a Galois-to-Fibonacci NLFSR transformation algorithm. Compare to existing transformation algorithms, the proposed algorithms are more efficient and generalized. First, the proposed algorithms has linear complexity and applicable on NLFSRs with arbitrary length. Second, bith algorithms cover more general cases compare to the "uniform" case in \cite{Dubrova09}. Third, no matter which taps from the NLFSR are taken in the output function, both algorithms show how to construct the corresponding output function and compute the initial value for the transformed NLFSR.

2. We customise the Galois-to-Fibonacci transformation algorithm to an Uniform\_Galois-to-Fibonacci algorithm to deal with the class of Espresso-like stream ciphers. The result shows that the Galois NLFSR used in any of this kind of ciphers can be transformed to a Fibonacci LFSR. The cipher after transformation is equivalent to a linear filter generator based stream cipher. 

3. We analyze the security of Espresso cipher on its transformed version and find out that it can be easily broken by typical attacks against linear filter generators. We apply the standard algebraic attack and the R\o njom-Helleseth attack and break the transformed cipher with complexity $\mathcal{O}(2^{68.44})$ and $\mathcal{O}(2^{66.86})$ respectively. We discuss other related attacks and conclude that this design method should not be used in the future. 

\subsection{Outline.}
In Section 2, we describe some necessary preliminaries and present the design specification of the Espresso cipher. Then we propose generalized NLFSR transformation algorithms in Section 3. In Section 4, we apply the proposed Uniform\_Galois-to-Fibonacci algorithm on the Galois NLFSR with period $2^n-1$ and show the result of transformation of G in Espresso cipher as an example. We then mount algebraic attacks on the transformed cipher and discuss the overall security of the class of Espresso-like stream ciphers in Section 5. We conclude in Section 6.

\section{Preliminaries}
\subsection{FSRs}

An $n$-bit Feedback Shift Register (FSR) consists of $n$ binary storage elements. We refer to each storage element as a \textbf{\textit{stage}} or \textbf{\textit{tap}} represented by $x_i, i \in [0, n-1]$. A FSR is controlled by a system clock, the bit values stored in it are shifted one stage to the left and the last stage is updated by a feedback function $f(X) = f(x_0, \ldots, x_{n-1})$ which takes any taps as input. A Linear Feedback Shift Register (LFSR) is a FSR with a linear feedback function. A Nonlinear Feedback Shift Register (NFSR) is a FSR with a nonlinear feedback function. The output function and feedback functions in a FSR are Boolean functions.

We denote the addition and multiplication in $GF(2)$ as "$\oplus$" and "$\cdot$" respectively throughout the paper. 

\begin{definition} \label{def:ANF}
The algebraic normal form (ANF) of a Boolean function $f:\{0, 1\}^n \rightarrow \{0, 1\}$ is a polynomial in GF(2) of type \cite{Dubrova09} 
\begin{center}
    $f_{n-1} = \sum_{i=0}^{2^n-1}c_ix_0^{i_0}\ldots x_{n-1}^{i_{n-1}}$,
\end{center}
where $c_i, i_0, \ldots, i_{n-1} \in \{0, 1\}$. Each term in a Boolean function is called a \textbf{\textit{monomial}}. For example, in function $f = x_0 \oplus x_1x_2$, both $x_0$ and $x_1x_2$ are called a monomial.
\end{definition}

\textbf{\textit{dep($\cdot$).}} A dependence list of a Boolean function $g$ denoted by $dep(g)$ is the list of the indexes of all the involved stages of FSR. For example, $g = x_0 \oplus x_1 \oplus x_2x_3$, then $dep(g) = [0, 1, 2, 3]$. If the indexes in the dependence list of $g$ are increased by $i$, we denote it as $g|_{+i}$. For example, $g = x_2x_3$, then $dep(g) = [2, 3]$, $g|_{+1} = x_3x_4$ and $dep(g|_{+1}) = [3, 4]$. If the function $g$ only has one term, then it is a monomial denoted as $m$. $dep(m)$ is the dependence list of the monomial.

We denote the the set of all the bits in FSR as an \textbf{\textit{internal state}} $X^t = \{x_0^t, \ldots, x_{n-1}^t\}$ at each clock $t$. The \textbf{\textit{initial value}} $X^0 = \{x_0^0, \ldots, x_{n-1}^0\}$ is the first internal state of the FSR. In a FSR, the output sequence is default of the sequence generated by tap $x_0$. In FSR-based stream ciphers, the output sequence is usually generated by a output function $f_z(x_0, \ldots, x_{n-1})$ which takes any tap from the FSR.

\textbf{\textit{Configurations.}} A FSR can be implemented in two kinds of configurations, namely Fibonacci and Galois configuration. In the \textbf{\textit{Fibonacci configuration}}, the feedback is only applied to the last stage. In the \textbf{\textit{Galois configuration}}, the feedback can be applied to every stage. An example of 4-bit Fibonacci NLFSR is given in Figure 1. The feedback $f_3$ is only applied to the last stage $x_3$. In Figure 2, we present a 4-bit Galois NLFSR. The feedback $f_2$ is applied to $x_2$ and the feedback $f_3$ is applied to $x_3$.

\begin{figure}[h]
\includegraphics[width=0.9\linewidth, height=3.5cm]{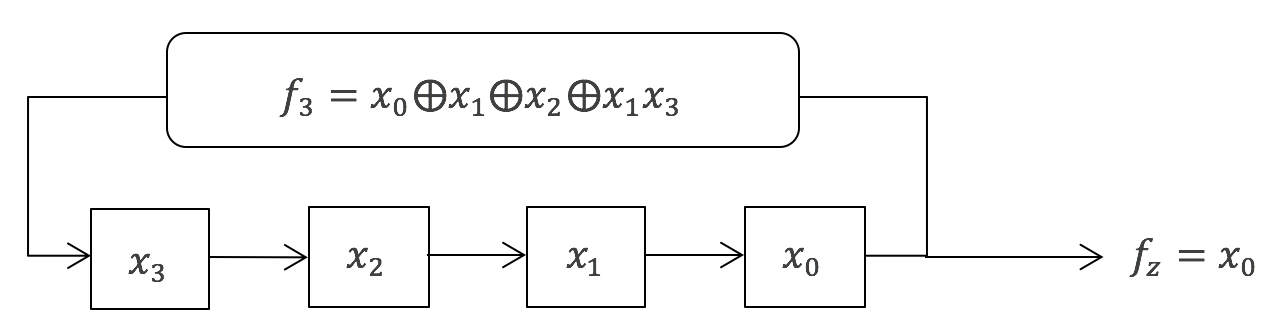}
\caption{An example of a 4-bit Fibonacci NLFSR.}
\end{figure}
\begin{figure}[h]
\includegraphics[width=0.9\linewidth, height=4cm]{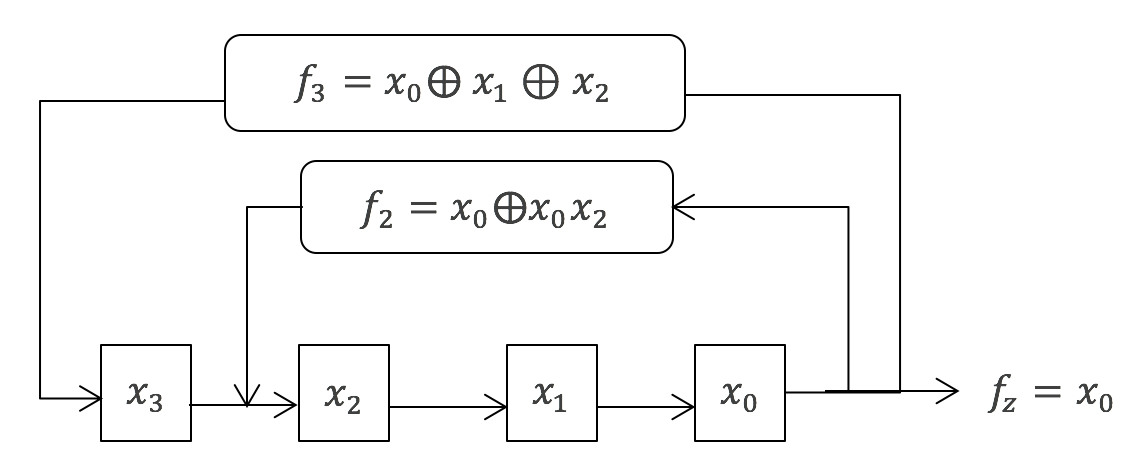}
\caption{An example of a 4-bit Galois NLFSR.}
\end{figure}

\begin{definition} \label{def:uniform}
Given an $n$-bit NLFSR. Suppose the smallest index of the stage that updated by a feedback is $\tau$. The NLFSR is \textbf{\textit{uniform}} if the feedback functions are of type

\begin{equation} \label{eq:uniform}
    \begin{aligned}
    &f_{n-1} = x_0 \oplus g_{n-1}(x_1, \ldots, x_\tau)\\
    &\ldots\\
    &f_\tau = x_{\tau+1} \oplus g_\tau(x_0, \ldots, x_\tau)\\
    &f_i = x_{i+1} \quad for \quad i \in [0, \tau-1],
    \end{aligned}
\end{equation}
where $g_i, i \in [\tau, n-1]$ is a Boolean function with $max(dep(g_i)) \leq \tau$ and $0 \notin dep(g_{n-1})$. 
\end{definition}

\subsection{Galois NLFSRs with Period $2^n -1$}
The problem of constructing NLFSR generating sequence with given long period is an open problem in studying the theory of NLFSR. In order to solve this problem, a scalable method is proposed to construct Galois NLFSRs with period $2^n -1$ from a $n$-stage maximum-length LFSR~\cite{Dubrova13}. The main idea is to employ the transformation algorithm presented in~\cite{Dubrova09} to shift nonlinear monomials from the feedback function of the last stage to lower stages in the LFSR. Since the FSR before and after transformation are equivalent, the transformed FSR is a Galois NLFSR with period $2^n -1$. The validity of this method is proved by the following theorem  proposed in~\cite{Dubrova13}.

\begin{theorem}
Let $N$ be an $n$-stage NLFSR with the feedback functions of type 
\begin{align*}
    &f_{n-1} = f_L \oplus f_N(x_1, x_2, \ldots, x_{n-2})\\
    &f_{n-2} = x_{n-1} \oplus f_N(x_0, x_1, \ldots, x_{n-3})\\
    &f_{n-3} = x_{n-2}\\
    &\qquad\ldots\\
    &f_0 = x_1,
\end{align*}
where $f_L$ is a linear Boolean function of type
\begin{center}
    $f_L = x_0 \oplus c_1x_1 \oplus \ldots \oplus c_{n-2}x_{n-2}$,
\end{center}
where $c_i \in \{0, 1\}, i \in [1, n-2]$. If the corresponding character polynomial 
\begin{center}
    $g(x) = 1 + c_1x + c_2x^2 + \ldots + c_{n-2}x^{n-2} + x^n$
\end{center}
is primitive. Then N has period $2^n -1$.
\end{theorem}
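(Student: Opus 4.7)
The plan is to show that the sequence $a_t := x_0^t$ produced at stage $0$ satisfies the pure LFSR recurrence with characteristic polynomial $g(x)$, and then lift this conclusion back to the full NLFSR state.

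First, I would exploit the trivial shift feedbacks $f_i = x_{i+1}$ for $i \in [0, n-3]$: an easy induction on $i$ (or on $t$) gives $x_i^t = a_{t+i}$ for every $0 \le i \le n-2$. Next, I would handle the anomalous stage $n-1$ by using the feedback at stage $n-2$. Since $x_{n-2}^{t+1} = x_{n-1}^t \oplus f_N(x_0^t, \ldots, x_{n-3}^t)$, substituting the shift relations above and rearranging yields
\begin{equation*}
    x_{n-1}^t = a_{t+n-1} \oplus f_N(a_t, a_{t+1}, \ldots, a_{t+n-3}).
\end{equation*}
So every bit of the NLFSR state at time $t$ is expressible in terms of the window $(a_t, a_{t+1}, \ldots, a_{t+n-1})$, and this expression is a triangular affine transformation that is clearly invertible.

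The key cancellation comes from applying the feedback at stage $n-1$. On the one hand, the recurrence gives $x_{n-1}^{t+1} = f_L(a_t, \ldots, a_{t+n-2}) \oplus f_N(a_{t+1}, \ldots, a_{t+n-2})$, using $x_i^t = a_{t+i}$ in the arguments of $f_L$ and $f_N$. On the other hand, applying the boxed identity above at time $t+1$ yields $x_{n-1}^{t+1} = a_{t+n} \oplus f_N(a_{t+1}, \ldots, a_{t+n-2})$. Equating these two expressions, the nonlinear terms coincide and cancel, leaving the purely linear recurrence
\begin{equation*}
    a_{t+n} = a_t \oplus c_1 a_{t+1} \oplus \cdots \oplus c_{n-2} a_{t+n-2},
\end{equation*}
whose characteristic polynomial is exactly $g(x)$. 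This is the step that justifies the ``compensation'' design and it is the conceptual heart of the proof; everything else is bookkeeping around it.

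Finally, since $g(x)$ is primitive, the sequence $\{a_t\}$ has period $2^n - 1$ whenever $(a_0, \ldots, a_{n-1}) \neq 0$. The bijection from windows $(a_t, \ldots, a_{t+n-1})$ to NLFSR states $(x_0^t, \ldots, x_{n-1}^t)$ established above transports this period to the NLFSR state sequence. The main obstacle I anticipate is making the ``period is exactly $2^n - 1$, not a proper divisor'' step rigorous: one has to check that the zero NLFSR state corresponds to the zero window of $a$'s (so that any nonzero initial NLFSR state produces a nonzero $a$-sequence), and that the bijection commutes with the one-step evolution of both registers. Both follow directly from the triangular-affine form of the correspondence, but they deserve an explicit line in a careful write-up.
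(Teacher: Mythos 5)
Your proof is correct. Note that the paper does not actually prove this statement itself --- it imports Theorem 1 from Dubrova's scalable-construction paper --- so there is no internal proof to compare against line by line; what you have written is, in substance, a self-contained specialization of the compensation machinery the paper develops afterwards. Your identity $x_{n-1}^t = a_{t+n-1} \oplus f_N(a_t,\ldots,a_{t+n-3})$ is precisely the state relation \eqref{eq:r} of Lemma \ref{lem1} for the block $f_N$ shifted from $f_{n-1}$ to $f_{n-2}$ (only bit $n-1$ needs compensating), and the cancellation of the two $f_N$ terms that leaves the linear recurrence $a_{t+n} = a_t \oplus c_1 a_{t+1} \oplus \cdots \oplus c_{n-2} a_{t+n-2}$ with characteristic polynomial $g(x)$ is the same cancellation the paper records in Section 4.1 when it transforms the uniform Galois NLFSR back into the maximum-length LFSR. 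What your route buys is a direct, elementary argument that avoids invoking any general equivalence theorem: the triangular affine correspondence between the window $(a_t,\ldots,a_{t+n-1})$ and the state $(x_0^t,\ldots,x_{n-1}^t)$, together with its compatibility with the one-step updates, transports the cycle structure of the primitive LFSR (one fixed point plus one cycle of length $2^n-1$) to $N$. Two points you flag should indeed be made explicit: the excluded state of $N$ is the image of the all-zero window, which is the all-zero state exactly because $f_N$, being a sum of monomials of degree at least two, satisfies $f_N(0,\ldots,0)=0$ (were $f_N$ to contain a constant term the excluded state would instead be $(0,\ldots,0,1)$, and the conclusion would still hold for the remaining $2^n-1$ states); and the claim ``$N$ has period $2^n-1$'' is to be read, as usual, as holding for every initial state other than that single excluded one.
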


The Galois NLFSR defined in Theorem 1 is one of the results of transformation of the maximum LFSR with feedback function $f_{n-1}= x_0 \oplus f_L(x_1, \ldots, x_{\tau}) \oplus f_N(x_1, x_2, \ldots, x_{n-\tau}) \oplus f_N(x_1, x_2, \ldots, x_{n-\tau})$. According to the transformation algorithm in~\cite{Dubrova09}, the monomials in $f_N$ can be further shifted to lower stages as long as the NLFSR after shifting is "uniform". Formally, we define the transformed NLFSR by feedback functions  
\begin{equation*}
    \begin{aligned}
    &f_{n-1} = x_0 \oplus g_{n-1}(x_1, \ldots, x_{n-\tau})\\
    &f_{n-2} = x_{n-1} \oplus g_{n-2}(x_0, \ldots, x_{n -\tau - 1})\\
    &\qquad\ldots\\
    &f_{\tau} = x_{\tau + 1} \oplus g_{\tau}(x_0, x_1)\\
    &f_{\tau-1} = x_{\tau}\\
    &\qquad\ldots\\
    &f_0 = x_1.
    \end{aligned}
\end{equation*}

After shifting the monomials in $f_N$ to different stages, the depth of circuits implementing the feedback functions is reduced, resulting in faster and small NLFSRs. This design method seems to be very promising for the 5G applications. 

\subsection{Espresso Cipher}
The main building block of the Espresso cipher~\cite{DubrovaH17} is a 256-bit Galois NLFSR G, which is constructed by the method present in Section 2.2. The designers first choose a maximum-length Fibonacci LFSR with feedback function
\begin{center}
    $f_{255} = x_0 \oplus x_{12} \oplus x_{48} \oplus x_{115} \oplus x_{133} \oplus x_{213}$.
\end{center}

Then they construct a Galois NLFSR F from the LFSR as
\begin{align*}
    f_{255} &= x_0 \oplus x_{12} \oplus x_{48} \oplus x_{115} \oplus x_{133} \oplus x_{213} \oplus x_{41}x_{70} \oplus x_{46}x_{87}\\
    f_{217} &= x_{218} \oplus x_3x_{32} \oplus x_8x_{49} \oplus x_{14}x_{72} \oplus x_{17}x_{92} \oplus x_{24}x_{119} \oplus x_{36}x_{145} \oplus x_{49}x_{72}x_{92}x_{119},
\end{align*}
and all remaining feedback functions are of type $f_i = x_{i+1}$.

The monomials in the functions are shifted to lower stages. The Galois NLFSR F is transformed to G
\begin{equation*}
\begin{aligned}
f_{255} &= x_0 \oplus x_{41}x_{70}\\
f_{251} &= x_{252} \oplus x_{42}x_{83} \oplus x_8\\
f_{247} &= x_{248} \oplus x_{44}x_{102} \oplus x_{40}\\
f_{243} &= x_{244} \oplus x_{43}x_{118} \oplus x_{103}\\
f_{239} &= x_{240} \oplus x_{46}x_{141} \oplus x_{117}\\
g_{235} &= x_{236} \oplus x_{67}x_{90}x_{110}x_{137}\\
f_{231} &= x_{232} \oplus x_{50}x_{159} \oplus x_{189}\\
f_{217} &= x_{218} \oplus x_3x_{32}\\
f_{213} &= x_{214} \oplus x_4x_{45}\\
f_{209} &= x_{210} \oplus x_6x_{64}\\
f_{205} &= x_{206} \oplus x_5x_{80}\\
f_{201} &= x_{202} \oplus x_8x_{103}\\
f_{197} &= x_{198} \oplus x_{29}x_{52}x_{72}x_{99}\\
f_{193} &= x_{194} \oplus x_{12}x_{121}.
\end{aligned}
\end{equation*}
The rest of the functions remain as $f_i = x_{i + 1}$.

The output function for Espresso is a nonlinear function with 20 variables
\begin{center}
$f_z = x_{80} \oplus x_{99} \oplus x_{137} \oplus x_{227} \oplus x_{222} \oplus x_{187} \oplus x_{243}x_{217} \oplus x_{247}x_{231} \oplus x_{213}x_{235} \oplus x_{255}x_{251} \oplus x_{181}x_{239} \oplus x_{174}x_{44} \oplus x_{164}x_{29} \oplus x_{255}x_{247}x_{243}x_{213}x_{181}x_{174}$.
\end{center}

The cipher is initialized by a 128-bit key $k_i, 0 \leq i \leq 127$ and an 96-bit initialization value IV. The initial internal states are 
\begin{align*}
x_i &= k_i, 0 \leq i \leq 127\\
x_i &= IV_{i -128}, 128 \leq i \leq 223\\
x_i &= 1, 224 \leq i \leq 254\\
x_i &= 0, i = 255.
\end{align*}

At the initialization phase, the cipher would be clocked 256 times, the output bit is xored with the stages $x_{255}$ and $x_{217}$
\begin{align*}
    f_{255} &= x_0 \oplus x_{41}x_{70} \oplus f_z\\
    f_{217} &= x_{218} \oplus x_3x_{32} \oplus f_z.
\end{align*}

\section{Transformation Algorithms}
In this section, we analyze the difference in internal states of the two NLFSRs before and after transformation. In order to fix the difference, we develop an idea of compensating the feedback function. Based on this idea, we propose several transformation algorithms for different application cases respectively. In the proposed algorithms, we consider the output function taken arbitrary taps from the NLFSR, and show the method to construct the corresponding output function and initial value for the transformed NLFSR. 

\subsection{Difference in Internal States}
The transformation of a NLFSR is done by shifting the monomials from a feedback function to another one. We define shifting a monomial $m$ from feedback function $f_a$ to $f_b$ as below \cite{Dubrova09}.

\begin{definition} \label{def:shifting}
Let $f_a$ and $f_b$ be feedback functions of bit $x_a$ and $x_b$ of an $n$-bit NLFSR, where $a, b \in [0, n-1]$ and $a \neq b$. The operation \textit{shifting} moves a monomial $m$ from $f_a$ to $f_b$, denoted by $f_a \xrightarrow{m} f_b$. The index of each variable $x_i$ in $m$ is changed to $x_{(i-a+b)\ \mathrm{mod}\ n}$. The shifted monomial in $f_b$ is denoted by $m|_{(-a+b)\ \mathrm{mod}\ n}$.
\end{definition}

Suppose we shift a monomial $m$ with $dep(m) \subseteq [1, n-1]$ in the feedback function $f_{n-1}$ of a Fibonacci NLFSR  to another feedback function $f_b$ following Definition \ref{def:shifting}. The Fibonacci NLFSR is transformed to the Galois NLFSR
\begin{equation*}
    \begin{aligned}
    &f_{n-1} = x_0 \oplus g_{n-1}(x_1, \ldots, x_{n-1}) \oplus m\\
    &\qquad\ldots\\
    &f_b = x_{b + 1} \oplus m|_{-(n-1-b)}\\
    &f_{b-1} = x_b\\
    &\qquad\ldots\\
    &f_0 = x_1.
    \end{aligned}
\end{equation*}

In the feedback function $f_{n-1}$, $m$ is xored to it representing that $m$ is moved from this function and $m|_{-(n-1-b)}$ is xored to $f_b$ showing that the monomial is shifted to $f_b$.

Now we analyze the internal states of NLFSRs before and after shifting a monomial. Suppose the internal states at clock $t$ of the Fibonacci NLFSR and the Galois NLFSR are denoted by $X^t = \{x_0^t, \ldots, x_{n-1}^t\}$ and $\hat{X}^t = \{\hat{x}_0^t, \ldots, \hat{x}_{n-1}^t\}$ respectively and $X^t = \hat{X}^t$. Then the internal states at next clock of the Fibonacci NLFSR are computed as
\begin{equation*}
    \begin{aligned}
    &x_{n-1}^{t+1} = x_0^t \oplus g_{n-1}(x_1^t, \ldots, x_{n-1}^t)\\
    &\qquad\ldots\\
    &x_b^{t+1} = x_{b + 1}^t\\
    &x_{b-1}^{t+1} = x_b^t\\
    &\qquad\ldots\\
    &x_0^{t+1} = x_1^t
    \end{aligned}
\end{equation*}
and the internal states at next clock of the Galois NLFSR are computed as
\begin{equation*}
    \begin{aligned}
    &\hat{x}_{n-1}^{t+1} = \hat{x}_0^t \oplus g_{n-1}(\hat{x}_1^t, \ldots, \hat{x}_{n-1}^t) \oplus \hat{m}^t\\
    &\qquad\ldots\\
    &\hat{x}_b^{t+1} = \hat{x}_{b + 1}^t \oplus \hat{m}|_{-(n-1-b)}^t\\
    &\hat{x}_{b-1}^{t+1} = \hat{x}_b^t\\
    &\qquad\ldots\\
    &\hat{x}_0^{t+1} = \hat{x}_1^t.
    \end{aligned}
\end{equation*}

We compare $X^{t+1}$ and $\hat{X}^{t+1}$ and find that they differ at bit $x_{n-1}$ and $x_b$. If we continue running the two NLFSRs, the internal states will differ in every bit. In \cite{Dubrova09}, it is proved that if the Galois NLFSR after transformation is "uniform", then the internal states only differ in bits $x_{b+1}, \ldots, x_{n-1}$ between the two NLFSRs if the initial value of the Galois NLFSR is calculated by the theorem in \cite{Dubrova10}. Therefore, if the output function of the NLFSR takes inputs from any bit among $x_{b+1}, \ldots, x_{n-1}$, then the two NLFSRs will generate different output sequences. In order to solve this problem. We develop the idea of compensating feedback functions of the NLFSR during the transformation. 

\begin{definition} \label{def:list}
Given an $n$-bit NLFSR, suppose we shift a monomial $m$ from $f_a$ to $f_b$ where $a, b \in [0, n-1]$ and $a > b$, the compensation list is constructed as $C = [0, \ldots, 0, m|_{-(a-b)}, \ldots, m|_{-1}, 0, \ldots, 0]$ where $C[i] = 0$ for $i \in [0, b]$ or $[a+1, n-1]$ and $C[i] = m|_{-(a-i+1)}$ for $i \in [b+1, a]$.
\end{definition}

\begin{definition} \label{def:compensating}
Given a Boolean function $f(x_0, \ldots, x_{n-1})$ and a compensation list $C$.  The operation \textit{compensating} replaces the tap $x_i$ in $f$ by $x_i \oplus C[i]$ iteratively from $i = n-1$ to $i = 0$.
\end{definition}

Based on the compensation idea, we prove that the differences between internal states of the two NLFSRs before and after transformation satisfy a relationship in the following Lemma. 

\begin{lemma} \label{lem1}
Given an $n$-bit Fibonacci NLFSR, we shift a monomial $m$ with $dep(m) \subseteq [1, n-1]$ from feedback function $f_{n-1}$ to $f_b$ by Definition \ref{def:shifting}. We construct the compensation list by Definition \ref{def:list} and compensate all the $g_j, j \in [0, n-1]$ in feedback functions of the transformed NLFSR according to Definition \ref{def:compensating} and calculate the initial value for it by the following equations
\begin{equation}\label{eq:r}
    \begin{aligned}
    &\hat{x}_i^t = x_i^t \oplus m|_{-(n-i)}^t \quad for \quad i \in [b+1, n-1];\\
    &\hat{x}_i^t = x_i^t \quad for \quad i \in [0, b].
    \end{aligned}
\end{equation}
If the index of $f_b$ satisfy $b \in [n-1-min(dep(m)), n-2]$, then internal states at any clock $t$ of the two NLFSRs before and after transformation satisfy \eqref{eq:r}.
\end{lemma}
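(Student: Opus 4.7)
The plan is to proceed by induction on the clock count $t$. The base case $t = 0$ is immediate because the initial values of the Galois NLFSR are defined by \eqref{eq:r}. For the inductive step, I assume \eqref{eq:r} holds at time $t$ and verify it at time $t+1$ component by component using the explicit update rules written just above the lemma.

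The argument splits into four kinds of stages. For $i \in [0, b-1]$ and $i \in [b+1, n-2]$ the feedback is a pure shift, so $\hat{x}_i^{t+1} = \hat{x}_{i+1}^t$ and $x_i^{t+1} = x_{i+1}^t$; the desired relation at $t+1$ then reduces to the identity $m|_{-(n-i)}(X^{t+1}) = m|_{-(n-i-1)}(X^t)$, which follows from the Fibonacci shift law $x_k^{t+1} = x_{k+1}^t$ applied to each variable index in $dep(m|_{-(n-i)})$. The hypothesis $b \geq n-1-\min(dep(m))$ is precisely what keeps those indices non-negative (they are automatically at most $n-2$), so the shift law applies.

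The non-trivial stages are $i = b$ and $i = n-1$. Both rest on a sublemma about the compensating operation: for any Boolean function $h$ appearing as a feedback ingredient, the compensated version $\tilde h$ satisfies $\tilde h(\hat X^t) = h(X^t)$. I will establish this by introducing recursively defined values $Y_j = \hat x_j^t \oplus C[j](Y_0, \ldots, Y_{j-1})$ and showing by a secondary induction on $j$ that $Y_j = x_j^t$. The recursion is well-posed because $dep(C[j]) = dep(m|_{-(n-j)}) \subseteq [0, j-1]$, a consequence of $\max(dep(m)) \leq n-1$; the inductive step for $j \in [b+1, n-1]$ combines the time-$t$ hypothesis $\hat x_j^t = x_j^t \oplus m|_{-(n-j)}(X^t)$ with $m|_{-(n-j)}(Y_0, \ldots, Y_{j-1}) = m|_{-(n-j)}(X^t)$ to force the cancellation $Y_j = x_j^t$. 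Applying the sublemma to $g_{n-1}$ yields $\hat x_{n-1}^{t+1} = x_0^t \oplus g_{n-1}(X^t)$, and then the identity $m(X^t) = m|_{-1}(X^{t+1})$ (valid by the shift law because $dep(m) \subseteq [1, n-1]$) produces exactly $x_{n-1}^{t+1} \oplus m|_{-1}(X^{t+1})$. At $i = b$, the dependencies of $m|_{-(n-1-b)}$ lie entirely in $[0, b]$, so compensation acts as the identity on it, and the extra $m|_{-(n-1-b)}(X^t)$ inherited from $\hat x_{b+1}^t$ via the inductive hypothesis cancels against the shifted monomial, recovering $x_b^{t+1} = x_{b+1}^t$.

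The main obstacle is the book-keeping of indices: compensation is a nested substitution in which each $C[j]$ already contains taps at strictly smaller positions, so making the single equation $\tilde h(\hat X^t) = h(X^t)$ rigorous demands the auxiliary $Y_j$ formalism together with a tight argument that every index appearing during the recursion stays in a region where either $C[\cdot] = 0$ or the time-$t$ inductive hypothesis is directly available. Pinning down the exact roles of $\min(dep(m))$ and $\max(dep(m))$ against the lower bound $b \geq n-1-\min(dep(m))$ is the crux; once that invariant is established, the four stage-types close uniformly.
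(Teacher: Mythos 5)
Your proposal is correct and follows essentially the same route as the paper: induction on the clock, stage-by-stage verification split into the trivial shift stages, stage $b$, and stage $n-1$, with the hypothesis $b \ge n-1-\min(dep(m))$ used exactly where the paper uses it to control the indices of the shifted and compensating monomials. Your auxiliary sublemma with the recursively defined $Y_j$ is just a cleaner packaging of the paper's explicit nested-substitution computation (its equation for $\hat{x}_{b+1}^k \oplus \hat{m}|_{-(n-(b+1))}^k$, etc.), and it uniformly subsumes the paper's two cases on $\max(dep(g_{n-1}))$.
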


\begin{proof}
We prove this Lemma by induction. First we suppose the internal states of the two NLFSRs before and after transformation satisfy the relation \eqref{eq:r} for clock $t = k$, then we prove the relation holds for the next clock $t = k + 1$.

According to feedback functions in Definition 2, internal states of the Fibonacci NLFSR at clock $t = k + 1$ are

\begin{equation}\label{eq:sf}
    \begin{aligned}
    &x_{n-1}^{k+1} = x_0^k \oplus g_{n-1}(x_1^k, \ldots, x_{n-1}^k)\\
    &x_{n-2}^{k+1} = x_{n-1}^k\\
    &\qquad\ldots\\
    &x_b^{k+1} = x_{b + 1}^k\\
    &x_{b-1}^{k+1} = x_b^k\\
    &\qquad\ldots\\
    &x_0^{k+1} = x_1^k.
    \end{aligned}
\end{equation}

Since the compensation of the feedback functions of the transformed NLFSR is done by following Definition \ref{def:compensating}, the feedback functions of the transformed NLFSR are

\begin{equation*}
    \begin{aligned}
    &f_{n-1} = x_0 \oplus g'_{n-1}(x_0, \ldots, x_{n-1})\\
    &f_{n-2} = x_{n-1}\\
    &\qquad\ldots\\
    &f_{b+1} = x_{b+2}\\
    &f_b = x_{b + 1} \oplus m|_{-(n-1-b)}\\
    &f_{b-1} = x_b\\
    &\qquad\ldots\\
    &f_0 = x_1,
    \end{aligned}
\end{equation*}
where $g'_{n-1}$ is the result of compensating $g_{n-1} \oplus m$ by the compensation list $C = [0, \ldots, 0, m|_{-(n-1-b)}, \ldots, m|_{-1}]$ iteratively. The shifted monomial $m|_{-(n-1-b)}$ in $f_b$ is not compensated because $max(dep(m)) \leq n-1$ is satisfied. Hence, $max(dep(m|_{-(n-1-b)})) \leq b$. The internal states of this NLFSR at clock $t = k + 1$ are calculated as

\begin{equation}\label{eq:sg}
    \begin{aligned}
    &\hat{x}_{n-1}^{k+1} = \hat{x}_0^k \oplus g'_{n-1}(\hat{x}_0^k, \ldots, \hat{x}_{n-1}^k)\\
    &\hat{x}_{n-2}^{k+1} = \hat{x}_{n-1}^k\\
    &\qquad\ldots\\
    &\hat{x}_{b+1}^{k+1} = \hat{x}_{b+2}^k\\
    &\hat{x}_b^{k+1} = \hat{x}_{b + 1}^k \oplus \hat{m}|_{-(n-1-b)}^k\\
    &\hat{x}_{b-1}^{k+1} = \hat{x}_b^k\\
    &\qquad\ldots\\
    &\hat{x}_0^{k+1} = \hat{x}_1^k.
    \end{aligned}
\end{equation}

Since the relation \eqref{eq:r} holds for $t = k$, we have $\hat{x}_i^k = x_i^k$ for $i \in [0, b]$. Hence, $\hat{x}_i^{k+1} = x_i^{k+1}$ for $i \in [0, b-1]$ is valid, which means the relation \eqref{eq:r} holds for internal states of bits $i \in [0, b-1]$ at clock $t = k + 1$. 

As for bit $x_b$, since $b \in [n-1-min(dep(m)), n-2]$, we have $dep(m) \subseteq [n-1-b, n-1]$. Therefore, the indexes of the compensation elements satisfy
\begin{equation}\label{eq:index}
    dep(m|_{-(n-i)}) \subseteq[i-b-1, i-1], \quad for \quad i \in [b+1, n-1].
\end{equation}
From \eqref{eq:sg} and $\hat{x}_{b+1}^k = x_{b+1}^k \oplus m|_{-(n-(b+1))}^k$, we calculate the internal state of bit $x_b$ as $\hat{x}_b^{k+1} = \hat{x}_{b + 1}^k \oplus \hat{m}|_{-(n-1-b)}^k = x_{b+1}^k \oplus m|_{-(n-(b+1))}^k \oplus \hat{m}|_{-(n-1-b)}^k$. Then from \eqref{eq:index}, we have $dep(m|_{-(n-1-b)}) \subseteq [0, b]$, hence, $m|_{-(n-(b+1))}^k = \hat{m}|_{-(n-1-b)}^k$. Therefore, we get $\hat{x}_b^{k+1} = x_{b+1}^k = x_b^{k+1}$, which means the relation \eqref{eq:r} holds for internal state of bit $x_b$ at clock $t = k + 1$. 

As for each bit $x_i, i \in [b+1, n-2]$, from \eqref{eq:sg} and the assumption that the relation \eqref{eq:r} holds for $t = k$, we have $\hat{x}_{i}^{k+1} = \hat{x}_{i+1}^k = x_{i+1}^k \oplus m|_{-(n-(i+1))}^k$. From(4), we have $dep(m|_{-(n-(i+1))}) \subseteq [i-b, i]$. Then from \eqref{eq:sf}, we get $x_i^{k+1} = x_{i+1}^k$ and $m|_{-(n-(i+1))}^k = m|_{-(n-i)}^{k+1}$. Therefore, we get $\hat{x}_{i}^{k+1} = x_{i}^{k+1} \oplus m|_{-(n-i)}^{k+1}$, which means the relation \eqref{eq:r} holds for internal states of bits $i \in [b+1, n-2]$ at clock $t = k + 1$.

As for the last bit $x_{n-1}$, from \eqref{eq:sg} and \eqref{eq:sf}, we calculate
\begin{equation}\label{eq:iter}
   \begin{aligned} 
    \hat{x}_{n-1}^{k+1} &= \hat{x}_0^k \oplus g'_{n-1}(\hat{x}_0^k, \ldots, \hat{x}_{n-1}^k)\\
    &= x_0^k \oplus g'_{n-1}(\hat{x}_0^k, \ldots, \hat{x}_{n-1}^k)\\
    &= x_{n-1}^{k+1} \oplus g_{n-1}(x_1^k, \ldots, x_{n-1}^k) \oplus g'_{n-1}(\hat{x}_0^k, \ldots, \hat{x}_{n-1}^k).
    \end{aligned}
\end{equation}

Now we have to prove that $g_{n-1}(x_1^k, \ldots, x_{n-1}^k) \oplus g'_{n-1}(\hat{x}_0^k, \ldots, \hat{x}_{n-1}^k) = m|_{-1}^{k+1}$, we distinguish two cases:

Case 1: $max(dep(g_{n-1})) \leq b$. In this case, we have $max(dep(g_{n-1}\oplus m)) \leq b$, then no replacing of the bits in $g'_{n-1}$ takes place during compensation. Hence, we have $g'_{n-1} = g_{n-1} \oplus m$. then we get $g_{n-1}(x_1^k, \ldots, x_{n-1}^k) \oplus g'_{n-1}(\hat{x}_0^k, \ldots, \hat{x}_{n-1}^k) = m^k = m|_{-1}^{k+1}$. Therefore, we have proved that $\hat{x}_{n-1}^{k+1} = x_{n-1}^{k+1} \oplus m|_{-1}^{k+1}$, which means the relation \eqref{eq:r} holds for internal states of bit $n-1$ at clock $t = k + 1$.

Case 2: $max(dep(g_{n-1})) > b$. In this case, $g'_{n-1}$ is the iterative compensation result of $g_{n-1} \oplus m$. During the compensation, the tap $x_i$ in every $g_j, 0 \leq j \leq n-1$ is replaced by $x_i \oplus C[i]$ iteratively from $i = n-1$ to $i = b+1$, where $C[i] = m|_{-(n-i)}$. Each iterative compensation step for $x_i$ makes sure that not only the taps in $g_{n-1} \oplus m$ but also the taps in previous compensated values $C[n-1], ..., C[i+1]$ are replaced. Consequently, the tap $x_{b+1}$ appeared in $g'_{n-1}$ is replaced by $x_{b+1} \oplus m|_{-(n-(b+1))}$. The tap $x_{b+2}$ appeared in $g'_{n-1}$ is replaced by $x_{b+2} \oplus m'|_{-(n-(b+2))}$, where $m'|_{-(n-(b+2))}$ is $m|_{-(n-(b+2))}$ with $x_{b+1}$ replaced by $x_{b+1} \oplus m|_{-(n-(b+1))}$. The rest of taps are replaced similarly. Then from the assumption and \eqref{eq:index}, we calculate the internal states
\begin{equation} \label{eq:is}
    \begin{aligned}
    \hat{x}_{b+1}^k \oplus \hat{m}|_{-(n-(b+1))}^k &= x_{b+1}^k \oplus m|_{-(n-(b+1))}^k \oplus m|_{-(n-(b+1))}^k \\
    &= x_{b+1}^k\\
    \hat{x}_{b+2}^k \oplus \hat{m}'|_{-(n-(b+2))}^k &= \hat{x}_{b+2}^k \oplus m|_{-(n-(b+2))}^k (\hat{x}_1^k, \ldots, \hat{x}_{b+1}^k \oplus \hat{m}|_{-(n-(b+1))}^k)\\
    &= \hat{x}_{b+2}^k \oplus m|_{-(n-(b+2))}^k (x_1^k, \ldots, x_{b+1}^k)\\
    &= x_{b+2}^k,
    \end{aligned}
\end{equation}
the internal states of rest bits $\hat{x}_i^k \oplus \hat{m}'|_{-(n-i)}^k = x_i^k, i \in [b+3, n-1]$ are calculated similarly. Since the $g'_{n-1}$ is the iterative compensation result of $g_{n-1} \oplus m$, combined with the internal states we just calculated, we get
\begin{equation*}
    \begin{aligned}
    g'_{n-1}(\hat{x}_0^k, \ldots, \hat{x}_{n-1}^k) &= g_{n-1} \oplus m(\hat{x}_0^k, \ldots, \hat{x}_b^k, \hat{x}_{b+1}^k \oplus \hat{m}|_{-(n-(b+1))}^k,\\
    &\quad\quad\quad\quad \hat{x}_{b+2}^k \oplus \hat{m}'|_{-(n-(b+2))}^k \ldots, \hat{x}_{n-1}^k \oplus \hat{m}'|_{-(n-(n-1))}^k)\\
    &= g_{n-1} \oplus m(x_0^k, \ldots, x_{n-1}^k).
\end{aligned}
\end{equation*}

Therefore, we prove that \eqref{eq:iter} can be further calculated to
\begin{equation*}
    \begin{aligned}
    \hat{x}_{n-1}^{k+1} &= x_{n-1}^{k+1} \oplus g_{n-1}(x_1^k, \ldots, x_{n-1}^k) \oplus g'_{n-1}(\hat{x}_0^k, \ldots, \hat{x}_{n-1}^k)\\
    &= x_{n-1}^{k+1} \oplus m(x_0^k, \ldots, x_{n-1}^k)\\
    &= x_{n-1}^{k+1} \oplus m|_{-1}^{k+1}.
    \end{aligned}
\end{equation*}

Now we have proved the relation \eqref{eq:r} is valid for internal states at clock $t = k+1$ when the internal states at clock $t = k$ satisfy \eqref{eq:r}. Since the initial value also satisfy this relationship, we conclude that the internal states of the two NLFSRs before and after transformation have relationship \eqref{eq:r} for every clock $t$.
\end{proof}

\subsection{Proposed Transformation Algorithms}
Lemma \ref{lem1} shows that the differences between internal states of the two NLFSRs are fixed if we compensate the feedback functions after shifting a single monomial from $f_{n-1}$ to $f_b$. The only condition is that $n-1-min(dep(m)) \leq b \leq n-2$. Similarly, when we shift multiple monomials from $f_{n-1}$ to different feedback functions, a similar relationship will hold if the condition is satisfied for each monomial. Based on this idea, we propose a new Fibonacci-to-Galois transformation algorithm in Theorem \ref{Fib}. By using this algorithm, we are able to transform a Fibonacci NLFSR into more generalized Galois NLFSRs compare to the algorithm in \cite{Dubrova09}, in which the transformed Galois NLFSR must be "uniform". By reversing the compensation process, the Galois NLFSR can be transformed back to the Fibonacci NLFSR. A reverse algorithm denoted as Galois-to-Fibonacci transformation algorithm is proposed in Theorem \ref{Gal}. To be noted that both of the proposed algorithms cover the "uniform" case.

\begin{theorem}\label{Fib}
Given an $n$-bit Fibonacci NLFSR with an output function $f_z(x_0, \ldots, x_{n-1})$ and an initial value $X^0 = \{x_0^0, \ldots, x_{n-1}^0\}$, we shift $r$ monomials $m_0, m_1, \ldots, m_{r-1}$ from $f_{n-1}$ to feedback functions $f_{b_0}, \ldots, f_{b_{r-1}}, 0 \leq b_0 < b_1 < \ldots < b_{r-1} \leq n-2$ respectively. For each monomial, we construct a compensation list by Definition \ref{def:list} and xor all the lists together as a combined compensation list $C$. Then we use $C$ to compensate the feedback functions and the output function by Definition \ref{def:compensating}, and calculate the initial value as $\hat{X}^0 = \{\hat{x}_i^0 = x_i^0 \oplus C[i]^0, i\in [0, n-1]\}$. If indexes of feedback functions $f_{b_0}, \ldots, f_{b_{r-1}}$ satisfy that $b_j \in [n-1-min(dep(m_j)), n-2], j \in [0, r-1]$, then the transformed Galois NLFSR has feedback functions
\begin{equation} \label{eq:ff}
    \begin{aligned}
    &f_{n-1} = x_0 \oplus g'_{n-1}(x_1, \ldots, x_{n-1})\\
    &f_i = x_{i+1} \oplus g'_i(x_0, \ldots, x_i) \quad for \quad i \in [0, n-2] 
    \end{aligned}
\end{equation}
and it generates the same sequence as the Fibonacci NLFSR outputs.
\end{theorem}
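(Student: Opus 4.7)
The plan is to bootstrap from Lemma \ref{lem1}, which handles a single monomial shift, and lift its conclusion to the multi-monomial setting by exploiting the GF(2)-linearity of the shift-register update. Once I have established the componentwise state-difference relation $\hat{x}_i^t = x_i^t \oplus C[i]^t$ for every clock $t$ and every $i \in [0,n-1]$, the equality of output sequences follows immediately from the way Definition \ref{def:compensating} is applied to $f_z$.

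First I would record that the combined list $C = C_0 \oplus \cdots \oplus C_{r-1}$, with each $C_j$ built from $m_j$ via Definition \ref{def:list}, inherits a triangular dependence structure: the index assumption $b_j \in [n-1-\min(\mathrm{dep}(m_j)), n-2]$ forces $\mathrm{dep}(C_j[i]) \subseteq [0,i-1]$ for every $i$, hence $\mathrm{dep}(C[i]) \subseteq [0,i-1]$ after XORing. This is exactly the condition that keeps Definition \ref{def:compensating} well-defined (the iterative replacement $x_i \mapsto x_i \oplus C[i]$ from $i=n-1$ downwards terminates). Then I would run the clock-by-clock induction of Lemma \ref{lem1} verbatim, but with $m$ replaced by $m_0 \oplus \cdots \oplus m_{r-1}$ in the feedback $f_{n-1}$ and with the cancellation step carried out at each shift position $b_j$. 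For stages $i < b_0$ the step is trivial; at each $b_j$ the added monomial $m_j|_{-(n-1-b_j)}$ cancels against the compensation term propagated from $\hat{x}_{b_j+1}^k$ exactly as in Case 2 of Lemma \ref{lem1}, because the remaining $C_{j'}[b_j]$ with $j' \ne j$ evaluate identically on $X^k$ and $\hat{X}^k$ (triangularity), so their contributions pass through unchanged on both sides; for intermediate stages the shift identity $C[i+1]^k = C[i]^{k+1}$ is again a direct consequence of the Fibonacci shift dynamics.

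For the top stage $i = n-1$, the critical identity $g'_{n-1}(\hat{X}^k) = g_{n-1}(X^k) \oplus \bigoplus_{j=0}^{r-1} m_j(X^k)$ must be shown, so that $\hat{x}_{n-1}^{k+1} = x_{n-1}^{k+1} \oplus C[n-1]^{k+1}$. This is the point where I would lean on GF(2)-linearity of compensation: the iterative substitution $x_i \mapsto x_i \oplus C[i]$ applied to $g_{n-1} \oplus \bigoplus_j m_j$ agrees, term by term, with the sum of iterative substitutions applied to $g_{n-1}$ and to each $m_j$ separately using list $C$. Combined with the componentwise relation already proved for stages below $n-1$, the Case-2 calculation of Lemma \ref{lem1} goes through with a single stroke for all $r$ monomials at once. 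The output-function compensation is then immediate: by construction, replacing each $x_i$ in $f_z$ by $x_i \oplus C[i]$ and evaluating at $\hat{X}^t = X^t \oplus C^t$ reproduces $f_z(X^t)$, and the choice of initial value $\hat{X}^0 = X^0 \oplus C^0$ provides the base case of the induction.

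The main obstacle I anticipate is the combined Case-2 argument: verifying that the top-down iterative replacement driven by the XORed list $C$ really does commute with the XOR of individual monomial contributions, so that no cross-terms between different $m_j$ appear when compensation is applied to a common $g_{n-1}$. Handling this cleanly requires exploiting the triangular structure $\mathrm{dep}(C[i]) \subseteq [0,i-1]$ to show that by the time the substitution reaches index $i$, the already-substituted upper indices have been transformed to quantities that evaluate to the corresponding $x$-values of the Fibonacci NLFSR at clock $k$, exactly mirroring the chain of equalities \eqref{eq:is} in the single-monomial proof.
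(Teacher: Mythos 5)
Your proposal is correct and takes essentially the same route as the paper's own proof: the same clock-by-clock induction establishing $\hat{x}_i^t = x_i^t \oplus C[i]^t$ with the combined list $C$, the same exploitation of the triangular structure of $C$ to show that the compensated $g'_{n-1}$ and $g'_{b_j}$ evaluated on $\hat{X}^k$ reduce to the original functions on $X^k$ (the multi-monomial analogue of Case~2 of Lemma~\ref{lem1}), and the same final step deducing output equality from compensating $f_z$ with $C$ and matching the initial state. The only cosmetic difference is that you make the GF(2)-linearity of the substitution over the XOR of monomials explicit, which the paper leaves implicit.
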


\begin{proof}
According to Definition \ref{def:list}, the compensation list for each shifted monomial is
\begin{equation*}
    \begin{aligned}
    &C_0 = [0, \ldots, 0, m_0|_{-(n-1-b_0)}, \ldots, m_0|_{-1}]\\
    &C_1 = [0, \ldots, 0, m_1|_{-(n-1-b_1)}, \ldots, m_1|_{-1}]\\
    &\ldots\\
    &C_{r-1} = [0, \ldots, 0, m_{r-1}|_{-(n-1-b_{r-1})}, \ldots, m_{r-1}|_{-1}].\\
    \end{aligned}
\end{equation*}
Then the combined compensation list is $C = C_0 \oplus C_1 \oplus \ldots \oplus C_{r-1}$. It is easy to get $C[b_j+1] = C[b_j] \oplus m|_{-(n-1-b_j)}$ for $j \in [0, r]$.

Since the indexes of feedback functions to which monomials shifted satisfy $b_j \in [n-1-min(dep(m_j)), n-2], j \in [0, r-1]$, the indexes of the shifted monomial satisfy that $dep(m_j|_{-(n-1-b_j)}) \subseteq [0, b_j]$. Besides, the compensation value $C[b_j]$ has indexes smaller than $b_j$. Hence, after shifting all the monomials and compensating the feedback functions, the indexes in $g_{b_j}$ are in $[0, b_j]$, resulting in feedback functions \eqref{eq:ff}. 

In order to prove the two NLFSRs generate same sequence, we first prove that the differences between the internal states of the two NLFSRs are fixed as 
\begin{equation}\label{eq:r2}
    \begin{aligned}
    \hat{x}_i^t &= x_i^t \oplus C[i]^t \quad for \quad i \in [b_0+1, n-1]\\
    \hat{x}_i^t &= x_i^t \quad\quad\quad\quad for \quad i \in [0, b_0].
    \end{aligned}
\end{equation}

The proof for this relationship is simialr to the proof of Lemma \ref{lem1}. The only difference is the compensation list we use is a combination of all the lists for every monomial, but this does not affect that the resulted feedback functions \eqref{eq:ff} are the iterative compensation result of original feedback functions with monomials shifted. Specifically, $g'_{n-1}$ is $g_{n-1} \oplus m_0 \oplus \ldots \oplus m_{r-1}$ compensated by $C$ iteratively. $g'_{b_j}, j \in [0, r]$ is $m_j|_{-(n-1-b_j)}$ compensated by $C$ iteratively, and $g'_i = 0$ if no monomial shifted to $f_i$. Each iterative compensation step makes sure that not only the taps in $g_{n-1} \oplus m_0 \oplus \ldots \oplus m_{r-1}, m_j|_{-(n-1-b_j)}$ but also the taps in previous compensated values $C[n-1], ..., C[i+1]$ are replaced accordingly. Consequently, the tap $x_{b_0+1}$ is replaced by $x_{b_0+1} \oplus C[b_0+1]|_{-(n-(b_0+1))}$. The tap $x_{b_0+2}$ is replaced by $x_{b_0+2} \oplus C[b_0+2]'|_{-(n-(b_0+2))}$, where $C[b_0+2]'|_{-(n-(b_0+2))}$ is $C[b_0+2]|_{-(n-(b_0+2))}$ with $x_{b_0+1}$ replaced by $x_{b_0+1} \oplus C[b_0+1]|_{-(n-(b_0+1))}$. The rest of taps are replaced similarly. Suppose the relationship in \eqref{eq:r2} holds for clock $t = k$, similar to the proof for \eqref{eq:is}, the internal states of the transformed NLFSR at clock $t = k$ can be calculated as

\begin{equation}\label{eq:is2}
    \begin{aligned}
    \hat{x}_{b_0+1}^k \oplus \hat{C}[b_0]|_{-(n-(b_0+1))}^k &= x_{b_0+1}^k \oplus C[b_0]|_{-(n-(b_0+1))}^k \oplus C[b_0]|_{-(n-(b_0+1))}^k \\
    &= x_{b_0+1}^k\\
    \hat{x}_{b_0+2}^k \oplus \hat{C}'[b_0]|_{-(n-(b_0+2))}^k &= \hat{x}_{b_0+2}^k \oplus C[b_0]|_{-(n-(b_0+2))}^k (\hat{x}_1^k, \ldots, \hat{x}_{b_0+1}^k \oplus \hat{C}[b_0]|_{-(n-(b_0+1))}^k)\\
    &= \hat{x}_{b_0+2}^k \oplus C[b_0]|_{-(n-(b_0+2))}^k (x_1^k, \ldots, x_{b_0+1}^k)\\
    &= x_{b_0+2}^k,
    \end{aligned}
\end{equation}
the internal states of rest bits $\hat{x}_i^k \oplus \hat{C}[i]|_{-(n-i)}^k = x_i^k, i \in [b_0+3, n-1]$ are calculated similarly. Therefore, we get 
\begin{equation*}
    \begin{aligned}
    g'_{n-1}(\hat{x}_0^k, \ldots, \hat{x}_{n-1}^k) &= g_{n-1} \oplus m_0 \oplus \ldots \oplus m_{r-1}(\hat{x}_0^k, \ldots, \hat{x}_{b_0}^k, \hat{x}_{b_0+1}^k \oplus \hat{C}[b_0+1]|_{-(n-(b_0+1))}^k,\\
    &\quad \hat{x}_{b_0+2}^k \oplus \hat{C}'[b_0+2]|_{-(n-(b_0+2))}^k \ldots, \hat{x}_{n-1}^k \oplus \hat{C}'[n-1]|_{-(n-(n-1))}^k)\\
    &= g_{n-1} \oplus m_0 \oplus \ldots \oplus m_{r-1}(x_0^k, \ldots, x_{n-1}^k)\\
    g'_{b_j}(\hat{x}_0^k, \ldots, \hat{x}_{b_j}^k) &= m_j|_{-(n-1-b_j)}(\hat{x}_0^k, \ldots, \hat{x}_{b_0}^k, \hat{x}_{b_0+1}^k \oplus \hat{C}[b_0+1]|_{-(n-(b_0+1))}^k,\\
    &\quad \hat{x}_{b_0+2}^k \oplus \hat{C}'[b_0+2]|_{-(n-(b_0+2))}^k \ldots, \hat{x}_{b_j}^k \oplus \hat{C}'[b_j]|_{-(n-b_j)}^k)\\
    &= m_j|_{-(n-1-b_j)}(x_0^k, \ldots, x_{b_j}^k).
\end{aligned}
\end{equation*}
Therefore, we have the internal state $\hat{x}_{n-1}^{k+1}$ of the transformed NLFSR at clock $t = k + 1$ calculated as
\begin{equation*}
    \begin{aligned}
    \hat{x}_{n-1}^{k+1} &= \hat{x}_0^k \oplus g'_{n-1}(\hat{x}_0^k, \ldots, \hat{x}_{n-1}^k)\\
    &= x_{n-1}^{k+1} \oplus g_{n-1}(x_1^k, \ldots, x_{n-1}^k) \oplus g'_{n-1}(\hat{x}_0^k, \ldots, \hat{x}_{n-1}^k)\\
    &= x_{n-1}^{k+1} \oplus m_0 \oplus \ldots \oplus m_{r-1}(x_0^k, \ldots, x_{n-1}^k)\\
    &= x_{n-1}^{k+1} \oplus m_0|_{-1}^{k+1} \oplus \ldots \oplus m_{r-1}|_{-1}^{k+1}\\
    &= x_{n-1}^{k+1} \oplus C[n-1]^{k+1},
    \end{aligned}
\end{equation*}
the internal states $\hat{x}_{b_j}^{k+1}, j \in [0, r]$ are
\begin{equation*}
    \begin{aligned}
    \hat{x}_{b_j}^{k+1} &= \hat{x}_{b_j+1}^k \oplus g'_{b_j}(\hat{x}_0^k, \ldots, \hat{x}_{b_j}^k)\\
    &= x_{b_j+1}^k \oplus C[b_j+1]^k \oplus m_j|_{-(n-1-b_j)}(x_0^k, \ldots, x_{b_j}^k)\\
    &= x_{b_j}^{k+1} \oplus C[b_j]^{k+1},
    \end{aligned}
\end{equation*}
the internal states $\hat{x}_i^{k+1}, i \in [b_0+1, n-2] \textbackslash [b_0, \ldots, b_r]$ are
\begin{equation*}
    \begin{aligned}
    \hat{x}_i^{k+1} &= \hat{x}_{i+1}^k \oplus g'_i(\hat{x}_0^k, \ldots, \hat{x}_i^k)\\
    &= \hat{x}_{i+1}^k \oplus 0\\
    &= x_i^{k+1} \oplus C[i]^{k+1}
    \end{aligned}
\end{equation*}
and the internal states $\hat{x}_i^{k+1}, i \in [0, b_0-1]$ are
\begin{equation*}
    \begin{aligned}
    \hat{x}_i^{k+1} &= \hat{x}_{i+1}^k\\
    &= x_i^{k+1}.
    \end{aligned}
\end{equation*}
Therefore, the relationship in \eqref{eq:r2} holds for internal states at clock $t = k+1$. Since the initial value $X^0$ also satisfy this relationship, we conclude that \eqref{eq:r2} holds for every clock. 

As for the output function, in the algorithm we compensate it iteratively by using $C$ as well. Therefore, the output function of the transformed NLFSR $f'_z$ is the compensation result of the original output function $f_z$, where the tap $x_{b_0+1}$ is replaced by $x_{b_0+1} \oplus C[b_0+1]|_{-(n-(b_0+1))}$. The tap $x_{b_0+2}$ is replaced by $x_{b_0+2} \oplus C[b_0+2]'|_{-(n-(b_0+2))}$, where $C[b_0+2]'|_{-(n-(b_0+2))}$ is $C[b_0+2]|_{-(n-(b_0+2))}$ with $x_{b_0+1}$ replaced by $x_{b_0+1} \oplus C[b_0+1]|_{-(n-(b_0+1))}$. The rest of taps are replaced similarly. According to \eqref{eq:is2}, we have
\begin{equation}
    \begin{aligned}
    f'_z(\hat{x}_0^t, \ldots, \hat{x}_{n-1}^t) =& f_z(\hat{x}_0^t, \ldots, \hat{x}_{b_0}^t, \hat{x}_{b_0+1}^t \oplus \hat{C}[b_0+1]|_{-(n-(b_0+1))}^t,\\
    &\hat{x}_{b_0+2}^t \oplus \hat{C}'[b_0+2]|_{-(n-(b_0+2))}^t \ldots, \hat{x}_{n-1}^t \oplus \hat{C}'[n-1]|_{-(n-(n-1))}^t)\\
    =& f_z(x_0^t, \ldots, x_{n-1}^t).
    \end{aligned}
\end{equation}
Therefore, the output sequences of the two NLFSRs before and after transformation are the same.
\end{proof}

Based on the result of Theorem \ref{Fib}, we are able to reverse the compensation process and transform the Galois NLFSR defined in \eqref{eq:ff} back to the original Fibonacci NLFSR. Formally, we define the feedback function as
\begin{equation} \label{eq:ff2}
    \begin{aligned}
    &f_{n-1} = x_0 \oplus g_{n-1}(x_1, \ldots, x_{n-1})\\
    &f_i = x_{i+1} \oplus g_i(x_0, \ldots, x_i) \quad for \quad i \in [0, n-2]
    \end{aligned}
\end{equation}
and propose a Galois-to-Fibonacci transformation algorithm to convert this kind of Galois NLFSRs to Fibonacci NLFSRs in the following theorem. 

\begin{theorem}\label{Gal}
Given an $n$-bit Galois NLFSR defined in \eqref{eq:ff2} with an output function $f_z(x_0, \ldots, x_{n-1})$ and an initial value $X^0 = \{x_0^0, \ldots, x_{n-1}^0\}$, we shift all the monomials in $g_i$ from $f_i, i \in [0, n-2]$ to $f_{n-1}$. The transformed NLFSR is obtained by following steps:

Step 1: Let the combined compensation list $C = [0, \ldots, 0]$;

Step 2: For each $g_i$, starts from $i = 0$, we remove it from $f_i$ and construct a compensation list $C_i = [0, \ldots, 0, g_i, g_i|_{+1}, \ldots, g_i|_{+(n-i)}]$ and compute $C = C \oplus C_i$;

Step 3: Compensation: we only use $C[i+1]$ to compensate the tap $x_{i+1}$ in all the feedback functions and the output function, which means $x_{i+1}$ is replaced by $x_{i+1} \oplus C[i+1]$;

Step 4: If $i \neq n-2$, then set $i = i + 1$ and go back to Step 2, otherwise, go to Step 5;

Step 5: Now we obtain the final combined compensation list $C$. We xor all the shifted monomials $g_i|_{+(n-i)}, i \in [0, n-2]$ to the feedback function of bit $x_{n-1}$ to get the final feedback functions for the transformed NLFSR. The output function $f'_z$ is constructed by replacing $x_i,i \in [0, n-1]$ in $f_z$ by $x_i \oplus C[i]$, no iteration needed in this compensation process; 

Step 6: The initial value $\hat{X}^0 = \{\hat{x}_0^0, \ldots, \hat{x}_{n-1}\}$ is computed by compensating $X^0 = \{x_0^0, \ldots, x_{n-1}^0\}$ by $C^0$ iteratively starting from $i = 0$ to $i = n-1$.

The transformed NLFSR is a Fibonacci NLFSR and it generate the same sequence as the Galois NLFSR outputs. 
\end{theorem}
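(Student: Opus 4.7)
The plan is to prove Theorem \ref{Gal} by induction on the step index $i$, viewing the algorithm as the iterative reversal of the Fibonacci-to-Galois transformation of Theorem \ref{Fib}. The key invariant to establish is that after the $i$-th iteration of Steps 2--4, the intermediate NLFSR still has the Galois form \eqref{eq:ff2}, with $g_0, \ldots, g_i$ already pulled up into $f_{n-1}$ as shifted monomials and $g_{i+1}, \ldots, g_{n-2}$ possibly modified by the compensation applied to the tap $x_{i+1}$ in Step 3; moreover, this intermediate NLFSR, paired with the partially compensated output function and the partially compensated initial state, should produce the same output sequence as the original Galois NLFSR.

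For the base case $i=0$ I would use a direct analog of Lemma \ref{lem1}, reversed in direction: rather than pushing a monomial from $f_{n-1}$ down to a lower stage, we pull the collection of monomials $g_0$ from $f_0$ up to $f_{n-1}$, with the compensation list $C_0$ playing the symmetric role of the list in Definition \ref{def:list}. The inductive step then reduces to applying the same single-step reasoning to $g_{i+1}$; the substitution $x_{i+1} \mapsto x_{i+1} \oplus C[i+1]$ performed in Step 3 inside every feedback function and the output function precisely mimics the iterative compensation step appearing in the proof of Theorem \ref{Fib}. Establishing the state equivalence $\hat{x}_j^t = x_j^t \oplus C[j]^t$ at each clock $t$ would then follow from the same cancellation argument leading to \eqref{eq:is2}, with the only subtlety being that here $C$ is accumulated incrementally rather than declared up front.

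After the final iteration $i = n-2$, all $g_i$ have been extracted into $f_{n-1}$ and the intermediate feedback functions have been stripped down to $f_i = x_{i+1}$, so the resulting NLFSR is Fibonacci by construction. At that point Step 5 collects the shifted monomials $g_i|_{+(n-1-i)}$ into $f_{n-1}$ and uses the fully accumulated $C$ to build $f'_z$ by a single substitution $x_i \mapsto x_i \oplus C[i]$, which is justified because each tap of $f_z$ refers to the Fibonacci state at the end of the algorithm and $C$ has already absorbed every iterative substitution from the inner loop. The initial value in Step 6, by contrast, must be processed iteratively because it refers to actual bit values rather than symbolic tap positions: evaluating $\hat{x}_j^0 = x_j^0 \oplus C[j]^0$ requires the values $\hat{x}_{j'}^0$ for $j' < j$ to already be in Fibonacci form.

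The main obstacle I expect is controlling the interaction between the compensations of different $g_i$: when Step 3 substitutes $x_{i+1}$ inside some $g_j$ with $j > i+1$, the newly introduced taps may themselves be substituted again at later iterations, so the final expressions for the feedback functions and the output function depend on the order of substitutions. The proof must show that, regardless of this ordering subtlety, the final compensation list $C$ together with the single-substitution recipe of Step 5 yields the same evaluation on the Fibonacci state as $f_z$ yields on the Galois state. This is the same bookkeeping that underlies \eqref{eq:is} in Lemma \ref{lem1}, but now threaded carefully across $n-1$ nested iterations, and verifying that it closes up correctly is where the technical work will concentrate.
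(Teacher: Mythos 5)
Your plan follows exactly the route the paper takes: Theorem \ref{Gal} is treated as the reverse of Theorem \ref{Fib}, and in fact the paper's own ``proof'' is a one-line remark that the theorem is completely the reverse process of Theorem \ref{Fib} with the proof omitted, so your inductive skeleton---pulling each $g_i$ up to $f_{n-1}$, maintaining the state relation $\hat{x}_j^t = x_j^t \oplus C[j]^t$, and reusing the cancellation argument behind \eqref{eq:is2}---already contains more detail than the published argument. The only caveat is that the point you yourself flag as the remaining technical work (verifying that the incrementally accumulated $C$, applied via the single non-iterative substitution of Step 5, reproduces $f_z$ on the original Galois state despite the nested substitutions across iterations) is precisely the bookkeeping the paper never carries out either, so finishing it would supplement rather than contradict the paper's treatment.
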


\begin{proof}
This theorem is completely the reverse process of the Theorem \ref{Fib}. The proof is omitted. 
\end{proof}

Comparing the feedback functions of the Galois NLFSR in Theorem \ref{Fib} and Theorem \ref{Gal} with the "uniform" Galois NLFSR in \cite{Dubrova09}, it is easy to find that the Galois NLFSR in proposed algorithms is more generalized than the "uniform" Galois NLFSR. In fact, the "uniform" case is included in the \eqref{eq:ff2}. Therefore, the two proposed algorithms both cover the "uniform" case. In this paper, we aim to break the class of Espresso-like ciphers, in which the Galois NLFSR is "uniform". In order to present clearly how the proposed algorithm can cover the "uniform" case, we customize an Uniform\_ Galois-to-Fibonacci algorithm in Theorem \ref{UniGal}.

\begin{theorem}\label{UniGal}
Given an $n$-bit uniform Galois NLFSR defined in Definition \ref{def:uniform} with an output function $f_z(x_0, \ldots, x_{n-1})$ and an initial value $X^0 = \{x_0^0, \ldots, x_{n-1}^0\}$, we shift all the monomials in $g_i$ from $f_i, i \in [\tau, n-2]$ to $f_{n-1}$. The transformed NLFSR is obtained by following steps:

Step 1: For each $g_i, i \in [\tau, n-2]$, we shift it to $f_{n-1}$ and construct a compensation list $C_i = [0, \ldots, 0, g_i, g_i|_{+1}, \ldots, g_i|_{+(n-i)}]$. Then we xor all the $C_i$ as a combined compensation list $C = C_{\tau} \oplus C_{\tau + 1} \oplus \ldots \oplus C_{n-2}$;

Step 2: We use the combined list $C$ to compensate the output function by replacing $x_i$ by $x_i \oplus C[i]$, no iteration needed in this compensation process;

Step 3: The initial value $\hat{X}^0 = \{\hat{x}_0^0, \ldots, \hat{x}_{n-1}\}$ is computed by compensating $X^0 = \{x_0^0, \ldots, x_{n-1}\}$ iteratively by $C^0$ starting from $i = 0$ to $i = n-1$.

The transformed NLFSR is a Fibonacci NLFSR and it generate the same sequence as the Galois NLFSR outputs.
\end{theorem}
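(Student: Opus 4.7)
\emph{Approach.} My plan is to derive Theorem \ref{UniGal} as a direct specialization of Theorem \ref{Gal} to uniform Galois NLFSRs, showing that the structure imposed by Definition \ref{def:uniform} forces the general iterative compensation to collapse into the three-step procedure stated here. To begin, I would verify that every uniform Galois NLFSR already fits the form \eqref{eq:ff2}: set $g_i = 0$ for $i \in [0, \tau-1]$ and keep the given $g_i$ for $i \in [\tau, n-1]$, all satisfying $\max(\mathrm{dep}(g_i)) \leq \tau$. Since Theorem \ref{Gal} is already proved, it then suffices to argue that, when its algorithm is run on this input, the resulting feedback functions, output function, and initial state agree with those produced by Steps 1--3 of Theorem \ref{UniGal}.

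\emph{Simplification of the feedback-function loop.} Running Theorem \ref{Gal} from $i = 0$ to $\tau - 1$ contributes nothing because $g_i = 0$, hence $C_i = 0$, and neither $C$ nor any function is altered. For $i \in [\tau, n-2]$, the algorithm removes $g_i$ from $f_i$, appends $C_i = [0,\ldots,0,g_i,g_i|_{+1},\ldots,g_i|_{+(n-1-i)}]$ to $C$, and then compensates $x_{i+1}$ throughout. The crucial observation is that $i+1 > \tau$, while every remaining $g_j$ with $j \in [\tau, n-1]$ satisfies $\max(\mathrm{dep}(g_j)) \leq \tau$, so $x_{i+1}$ simply does not occur in any $g_j$; consequently the per-iteration substitution leaves every feedback polynomial untouched. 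At the end of the loop only the explicit shifts survive, giving $f_i = x_{i+1}$ for $i \in [0, n-2]$ and $f_{n-1} = x_0 \oplus g_{n-1} \oplus \bigoplus_{i=\tau}^{n-2} g_i|_{+(n-1-i)}$, which is a Fibonacci NLFSR.

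\emph{Output function, initial state, and the main obstacle.} Theorem \ref{Gal}'s Step 5 already prescribes that $f'_z$ be assembled by substituting $x_j \mapsto x_j \oplus C[j]$ into $f_z$ using the final combined list $C$ without further iteration, and Step 6 computes the initial state by iterative compensation of $X^0$ by $C^0$; these coincide verbatim with Steps 2 and 3 of Theorem \ref{UniGal}. The main obstacle I anticipate is the careful verification that the per-iteration substitutions of the output function carried out in Theorem \ref{Gal}'s Step 3 do produce the same Boolean function as the single simultaneous substitution used here. In the uniform setting this follows from the dependency bound $\mathrm{dep}(g_i|_{+(j-i)}) \subseteq [j-i,\, j-i+\tau]$: any tap newly introduced by compensating $x_{i+1}$ either lies at index $\leq \tau$ (where $C[\cdot] = 0$) or is itself compensated in the same pass with the appropriate $C$-entry, so iteration order becomes immaterial. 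Once this collapse is established, the conclusion that the transformed NLFSR is a Fibonacci NLFSR producing the same output sequence follows immediately by appeal to Theorem \ref{Gal}.
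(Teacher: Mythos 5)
Your overall route is exactly the paper's: the authors also prove Theorem \ref{UniGal} by specializing Theorem \ref{Gal}, observing that in the uniform case every $g_j$ has $\max(\mathrm{dep}(g_j))\leq \tau < i+1$, so the per-iteration replacement of $x_{i+1}$ never touches a feedback function and the loop of Steps 2--4 collapses, leaving $f_i=x_{i+1}$ for $i\in[0,n-2]$ and $f_{n-1}=x_0\oplus g_{n-1}\oplus\bigoplus_{i=\tau}^{n-2}g_i|_{+(n-1-i)}$, with the output-function and initial-value steps carried over verbatim. The paper's proof is in fact terser than yours and does not even discuss the output function, so your attempt to justify that the iterative compensation of $f_z$ equals the single simultaneous substitution is a welcome addition --- but the reason you give for it is wrong. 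A tap introduced by replacing $x_{i+1}$ with $x_{i+1}\oplus C[i+1]$ lies in $\mathrm{dep}(g_k|_{+(i-k)})\subseteq[i-k,\,i-k+\tau]$ with $k\geq\tau$, hence has index at most $i$; such a tap with index in $[\tau+1,i]$ is \emph{not} ``compensated in the same pass,'' because the later iterations only substitute for $x_{j+1}$ with $j>i$, and the iterations that would have touched it have already passed. That non-re-substitution is precisely why the iterative process in Theorem \ref{Gal} reproduces the single-pass substitution of Step 2; if the introduced taps were re-compensated as you assert, the iterative result would contain extra terms and would \emph{differ} from Step 2 (compare the paper's Espresso computation, where e.g.\ $\hat{x}_{255}=x_{255}\oplus x_{40}x_{69}\oplus x_{212}\oplus\cdots$ leaves the compensation tap $x_{212}>\tau$ unsubstituted). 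The cleanest repair is to argue directly from the state correspondence underlying Theorem \ref{Fib}/\ref{Gal}: the Galois and Fibonacci states satisfy $x_i^{G,t}=x_i^{F,t}\oplus C[i](X^{F,t})$ with $C[i]$ written in the untransformed variables, so plugging this relation into $f_z$ gives exactly the non-iterative substitution of Step 2, while solving it for $X^{F,0}$ given $X^{G,0}$ (possible index-by-index since $\mathrm{dep}(C[i])\subseteq[0,i-1]$) gives the iterative Step 3. With that sentence corrected, your proof is sound and matches the paper's argument.
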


\begin{proof}
This algorithm is extracted from Theorem \ref{Gal}. The iterative compensation steps 2 to 4 in Theorem \ref{Gal} are deleted in this algorithm. As we can see in Step 2 and Step 3 in Theorem \ref{Gal}, for each $g_i, i\in [\tau, n-2]$ the compensation is carried out only for tap $x_{i+1}$. However, in the feedback functions of uniform Galois NLFSR, the indexes of any taps in $g_i, i\in [\tau, n-1]$ are equal or smaller than $\tau$. Hence, no compensation is executed.
\end{proof}

\section{Apply Transformation Algorithm}
In this section, we apply the proposed transformation algorithm in Theorem \ref{UniGal} on Galois NLFSRs constructed by Dubrova's scalable method. The result shows that this kind of Galois NLFSRs can always be transformed back into the original Fibonacci LFSR where they are transformed from. For instance, we transform the Galois NLFSR in the Espresso cipher back into a LFSR with a nonlinear output function. 

\subsection{Transform Galois NLFSRs with Period $2^n - 1$}
In Section 2.2, we briefly  introduced how to construct a Galois NLFSR with period $2^n-1$ by using Dubrova's scalable method. The constructed Galois NLFSR is a uniform NLFSR with feedback functions
\begin{equation*}
    \begin{aligned}
    &f_{n-1} = x_0 \oplus g_{n-1}(x_1, \ldots, x_{n-\tau})\\
    &f_{n-2} = x_{n-1} \oplus g_{n-2}(x_0, \ldots, x_{n -\tau - 1})\\
    &\qquad\ldots\\
    &f_{\tau} = x_{\tau + 1} \oplus g_{\tau}(x_0, x_1)\\
    &f_{\tau-1} = x_{\tau}\\
    &\qquad\ldots\\
    &f_0 = x_1,
    \end{aligned}
\end{equation*}

\noindent where $n- \tau \leq \tau$ and an output function $f_z$. 

In this subsection, we apply the Uniform\_Galois-to-Fibonacci transformation algorithm in Theorem \ref{UniGal} on this Galois NLFSR. The transformation process has 3 steps. We first shift all monomials in $g_i, i \in [\tau, n-2]$ to $f_{n-1}$. The feedback functions are transformed to
\begin{equation*}
    \begin{aligned}
    &\hat{f}_{n-1} = x_0 \oplus g_{n-1}(x_1, \ldots, x_{n-\tau})  \oplus g_{\tau}|_{+n-\tau -1} \oplus \ldots \oplus g_{n-2}|_{+1}\\
    &\hat{f}_{n-2} = x_{n-1}\\
    &\qquad\ldots\\
    &\hat{f}_{\tau} = x_{\tau + 1}\\
    &\qquad\ldots\\
    &\hat{f}_0 = x_1.
    \end{aligned}
\end{equation*}
For each $g_i$, we construct a compensation list $C_i = [0, \ldots, 0, g_i, g_i|_{+1}, \ldots, g_i|_{+(n-i)}]$. For example, for $g_\tau$, the compensation list is $C_\tau = [0, \ldots, 0, g_\tau, g_\tau|_{+1}, \ldots, g_\tau|_{+(n-\tau)}]$. Then we get the combined compensation list as 
\begin{equation*}
    \begin{aligned}
    C =& C_\tau \oplus C_{\tau+1} \oplus \ldots \oplus C_{n-2}\\
    =& [0, \ldots, 0, g_\tau, g_\tau|_{+1} \oplus g_{\tau+1},\ldots, \\
    & g_\tau|_{+n-\tau-2} \oplus g_{\tau+1}|_{+n-\tau-3} \oplus \ldots \oplus g_{n-3}|_{+1} \oplus g_{n-2}].
    \end{aligned}
\end{equation*}

In Step 2, we use $C$ to compensate the output function. The resulted output functio is
\begin{equation*}
    \begin{aligned}
    \hat{f}_z =& f_z(x_0, \ldots, x_{\tau}, x_{\tau+1} \oplus C[\tau+1], \ldots, x_{n-1} \oplus C[n-1]\\
    =& f_z(x_0, \ldots, x_{\tau}, x_{\tau+1} \oplus g_\tau, \ldots,\\
    & x_{n-1} \oplus g_\tau|_{+n-\tau-2} \oplus g_{\tau+1}|_{+n-\tau-3} \oplus \ldots \oplus g_{n-3}|_{+1} \oplus g_{n-2}) 
    \end{aligned}
\end{equation*}

The initial value for the transformed NLFSR is calculated according to Step 3 in Theorem \ref{UniGal}.

As we can see, the resulted NLFSR is in Fibonacci configuration. Moreover, since the monomials in $g_i, i \in [\tau, n-2]$ are all shifted from $f_L$ and two $f_N$ when the Galois NLFSR is constructed, the feedback function $\hat{f}_{n-1}$ is actually equal to $f_{n-1} = x_0 \oplus f_L \oplus f_N \oplus f_N$ which is $f_{n-1} = x_0 \oplus f_L$. Therefore, the Galois NLFSR is transformed back into the maximum-length LFSR in Fibonacci configuration with a nonlinear feedback function.

\subsection{Transform Galois NLFSR in Espresso}
As described in Section 2.3, the NLFSR used in Espresso cipher is a 256-bit Galois NLFSR with a 20-variable feedback function. In this section, we apply the Uniform\_Galois-to-Fibonacci transformation algorithm to transform this Galois NLFSR. The transformation process is similar to the application in Section 4.1. Here, we show the details of each step. 

Step 1: We shift all the monomials $\{x_{12}x_{121}, x_{29}x_{52}x_{72}x_{99}, x_8x_{103}, x_5x_{80}, x_6x_{64}, x_4x_{45},\\
x_3x_{32}, x_{50}x_{159} \oplus x_{189}, x_{67}x_{90}x_{110}x_{137}, x_{46}x_{141} \oplus x_{117}, x_{43}x_{118} \oplus x_{103}, x_{44}x_{102} \oplus x_{40}, x_{42}x_{83} \oplus x_8\}$ from $\{f_{193}, f_{197}, f_{201}, f_{205}, f_{209}, f_{213}, f_{217}, f_{231}, f_{235}, f_{239}, f_{243}, f_{247}, f_{251}\}$ back to $f_{255}$ respectively. The feedback functions are transformed to

\begin{equation} \label{eq:ff_Espresso}
    \begin{aligned}
    \hat{f}_{255} &= x_{41}x_{70} \oplus x_{74}x_{183} \oplus x_{87}x_{110}x_{130}x_{157} \oplus \ldots \oplus x_{46}x_{87} \oplus x_{12}\\
    &= x_0 \oplus x_{12} \oplus x_{48} \oplus x_{115} \oplus x_{133} \oplus x_{213}
    \end{aligned}
\end{equation}
and $\hat{f}_i = x_{i+1}$ for the rest of $i \in [0, n-2]$.

The Compensation lists for $g_i, i = \{193, 197, 201, \ldots, 251\}$ are
\begin{equation*}
\begin{aligned}
&C_{193} = [0, \ldots, 0, x_{12}x_{121}, \ldots, x_{73}x_{182}]\\
&C_{197} = [0, \ldots, 0, x_{29}x_{52}x_{72}x_{99}, \ldots, x_{86}x_{109}x_{129}x_{156}]\\
&C_{201} = [0, \ldots, 0, x_8x_{103}, \ldots, x_{61}x_{156}]\\
&C_{205} = [0, \ldots, 0, x_5x_{80}, \ldots, x_{54}x_{129}]\\
&C_{209} = [0, \ldots, 0, x_6x_{64}, \ldots, x_{51}x_{109}]\\
&C_{213} = [0, \ldots, 0, x_4x_{45}, \ldots, x_{45}x_{86}]\\
&C_{217} = [0, \ldots, 0, x_3x_{32}, \ldots, x_{40}x_{69}]\\
&C_{231} = [0, \ldots, 0, x_{50}x_{159} \oplus x_{189}, \ldots, x_{73}x_{182} \oplus x_{212}]\\
&C_{235} = [0, \ldots, 0, x_{67}x_{90}x_{110}x_{137}, \ldots, x_{86}x_{109}x_{129}x_{156}]\\
&C_{239} = [0, \ldots, 0, x_{46}x_{141} \oplus x_{117}, \ldots, x_{61}x_{156} \oplus x_{132}]\\
&C_{243} = [0, \ldots, 0, x_{43}x_{118} \oplus x_{103}, \ldots, x_{54}x_{129} \oplus x_{114}]\\
&C_{247} = [0, \ldots, 0, x_{44}x_{102} \oplus x_{40}, \ldots, x_{51}x_{109} \oplus x_{47}]\\
&C_{251} = [0, \ldots, 0, x_{42}x_{83} \oplus x_8, \ldots, x_{45}x_{86} \oplus x_{11}].
\end{aligned}
\end{equation*}

The combined compensation list is
\begin{equation*}
    \begin{aligned}
    C &= [0, \ldots, 0, x_{12}x_{121}, \ldots, x_{16}x_{125} \oplus x_{29}x_{52}x_{72}x_{99}, \ldots,\\
    & \quad x_{40}x_{69} \oplus x_{212} \oplus x_{132} \oplus x_{114} \oplus x_{47} \oplus x_{11}].
    \end{aligned}
\end{equation*}

Step 2: We use $C$ to compensate the output function. The output function becomes
\begin{equation*}
    \begin{aligned}
    \hat{f}_z =& x_{80} \oplus x_{99} \oplus x_{137} \oplus \hat{x}_{227} \oplus \hat{x}_{222} \oplus x_{187} \oplus \hat{x}_{243}\hat{x}_{217} \oplus \hat{x}_{247}\hat{x}_{231} \oplus \hat{x}_{213}\hat{x}_{235} \\
    &\oplus \hat{x}_{255}\hat{x}_{251} \oplus x_{181}\hat{x}_{239} \oplus x_{174}x_{44} \oplus x_{164}x_{29} \oplus \hat{x}_{255}\hat{x}_{247}\hat{x}_{243}\hat{x}_{213}x_{181}x_{174},
    \end{aligned}
\end{equation*}
where $\hat{x}_i$ for $i = \{213, 217, 222, 227, 231, 235, 239, 243, 247, 251, 255\}$ denotes $x_i$ replaced by $x_i \oplus C[i]$. Specifically, 
\begin{equation*}
    \begin{aligned}
    \hat{x}_{213} &= x_{213} \oplus x_{31}x_{140} \oplus x_{44}x_{67}x_{87}x_{114} \oplus x_{19}x_{114} \oplus x_{12}x_{87} \oplus x_9x_{67}\\
    \hat{x}_{217} &= x_{217} \oplus x_{35}x_{144} \oplus x_{48}x_{71}x_{91}x_{118} \oplus x_{23}x_{118} \oplus x_{16}x_{91} \oplus x_{13}x_{71} \oplus x_7x_{48}\\
    \hat{x}_{222} &= x_{222} \oplus x_{40}x_{149} \oplus x_{53}x_{76}x_{96}x_{123} \oplus x_{28}x_{123} \oplus x_{21}x_{96} \oplus x_{18}x_{76} \oplus x_{12}x_{53} \oplus x_7x_{36}\\
    \hat{x}_{227} &= x_{227} \oplus x_{45}x_{154} \oplus x_{58}x_{81}x_{101}x_{128} \oplus x_{33}x_{128} \oplus x_{26}x_{101} \oplus x_{23}x_{81} \oplus x_{17}x_{58} \oplus x_{12}x_{41}\\
    \hat{x}_{231} &= x_{231} \oplus x_{49}x_{158} \oplus x_{62}x_{85}x_{105}x_{132} \oplus x_{37}x_{132} \oplus x_{30}x_{105} \oplus x_{27}x_{85} \oplus x_{21}x_{62} \oplus x_{16}x_{45}\\
    \hat{x}_{235} &= x_{235} \oplus x_{66}x_{89}x_{109}x_{136} \oplus x_{41}x_{136} \oplus x_{34}x_{109} \oplus x_{31}x_{89} \oplus x_{25}x_{66} \oplus x_{20}x_{49} \oplus x_{192}\\
    \hat{x}_{239} &= x_{239} \oplus x_{45}x_{140} \oplus x_{38}x_{113} \oplus x_{35}x_{93} \oplus x_{29}x_{70} \oplus x_{24}x_{53} \oplus x_{196}\\
    \hat{x}_{243} &= x_{243} \oplus x_{42}x_{117} \oplus x_{39}x_{97} \oplus x_{33}x_{74} \oplus x_{28}x_{57} \oplus x_{200} \oplus x_{120}\\
    \hat{x}_{247} &= x_{247} \oplus x_{43}x_{101} \oplus x_{37}x_{78} \oplus x_{32}x_{61} \oplus x_{204} \oplus x_{124} \oplus x_{106}\\
    \hat{x}_{251} &= x_{251} \oplus x_{41}x_{82} \oplus x_{36}x_{65} \oplus x_{208} \oplus x_{128} \oplus x_{110} \oplus x_{43}\\
    \hat{x}_{255} &= x_{255} \oplus x_{40}x_{69} \oplus x_{212} \oplus x_{132} \oplus x_{114} \oplus x_{47} \oplus x_{11}.
    \end{aligned}
\end{equation*}

Step 3: The initial value $\hat{X}^0$ for the transformed NLFSR is calculated by compensating $X^0 = \{x_0^0, \ldots, x_{255}^0\}$ by $C^0$ iteratively starting from $i = 0$ to $i = 255$.

As we can see in the final feedback functions \eqref{eq:ff_Espresso}, the feedback only fed to the last stage $x_{255}$ and the feedback function $f_{255}$ only contains linear terms. Therefore, the transformed NLFSR is a Fibonacci LFSR. The Espresso cipher is actually equivalent to a LFSR filter generator. Specifically, the nonlinear output function consists of 2289 monomials. There are 104 variables in the function and the algebraic degree is 12. With the corresponding initial value, transformed feedback functions and transformed output function, the transformed LFSR generates the same output sequence as the Galois NLFSR G in Espresso cipher does. 

\section{Cryptanalysis}
In this section, we conduct cryptabalysis on transformed version of the Espresso stream cipher, which is a LFSR filter generator according to the result in last section. The security analysis of LFSR filter generators has been thoroughly studied in literature. Several powerful cryptanalytic attacks such as algebraic attacks and correlation attacks have been proposed to break LFSR filter generators. 

\subsection{Algebraic Attack}
Algebraic attack is a very powerful cryptanalysis technique to break LFSR filter generators. The basic idea is to build a system of equations connecting the keystream bits and the initial state of the LFSR, and then solve these equations to recover the secret key in the initial state. The equations are obtained either directly using the output function or multiplying it with a well chosen multivariate equation to lower the degree of the output function. 

\subsubsection{Standard Algebraic Attack.}
The most representative algebraic attack is proposed by Courtois~\cite{CourtoisM03,Courtois03}. A detailed analysis of complexity of this standard algebraic attack is re-estimated by Hawkes and Rose~\cite{HawkesR04}. We apply this attack on the transformed LFSR in Section 4. The attack process and the complexity of each step are presented below.

Step 1: Form a system of equations. As presented in Section 4, the output function of the transformed LFSR is a 12 degree Boolean function with 104 variables. First, we consider reducing the algebraic degree of the equations which we get from the output function to relate the output keystream bits to the initial state of the LFSR. From the function $\hat{f}_z$, we observe that monomials with highest degree are included in the term $\hat{x}_{255}\hat{x}_{247}\hat{x}_{243}\hat{x}_{213}x_{181}x_{174}$ where $\hat{x}_{255}$, $\hat{x}_{247}$, $\hat{x}_{243}$ and $\hat{x}_{213}$ are represented in previous section. This term can be expended into 2058 monomials. It is noteworthy that the bit $x_{181}$ and $x_{174}$ appear in all these monomials. Therefore, we can multiply the output function with $(x_{181}+1)$ or $(x_{174}+1)$ and obtain a Boolean function with lower degree. Moreover, we can further reduce the degree by multiplying it with $(x_{44}+1)$ or $(x_{67}+1)$ or $x_{87}+1)$ or $(x_{114}+1)$ or $(x_{66}+1)$ or $(x_{89}+1)$ or $(x_{109}+1)$ or $x_{136}+1)$. In total we obtain 16 functions with degree 8. One of such function is sufficient to form a system of multivariate equations we need for the attack. For instance, we multiply the output function $f_z$ with $g = (x_{181}+1)(x_{44}+1)$ to get function $h$

\begin{center}
    $h = f_zg = f_z(x_{181}+1)(x_{44}+1)$.
\end{center}

The degree of the output function $f_z$ is denoted as $d_f = 12$, the degree of $g$ is $e = 2$ and the degree of $h$ is $d = 8$. For each known keystream bit $z^t$ at clock $t$, we derive equation

\begin{center}
    $h(\vec X^t) = z^tg(\vec X^t)$.
\end{center}

Since the keystream $z^t$ is a binary bit and only affect $g(X^t)$, we rewrite the equation as

\begin{equation}\label{eq:12}
\begin{aligned}
    h(\vec X^t) + g(\vec X^t, z^t) = 0.
\end{aligned}
\end{equation}
The complexity for this step can be ignored. 

Step 2: Pre-computation. As described in~\cite{HawkesR04}, any Boolean function of the LFSR state can be expressed as a product of the monomial state matrix $\vec M_d$ with a row vector of that Boolean function such as $\vec h$ and $\vec g$. Mapping from one monomial state to the next monomial state can be expressed as a matrix product $\vec M_d^{t+1} = \vec R_d\vec M_d^t$. Moreover, for every clock $t$, the monomial state of the LFSR can be expressed as $\vec M_d^t = \vec R_d^t \cdot \vec M_d^0$. We consider the Boolean functions $h$ and $g$ as depending on distinct monomial states $\vec M_d$ and $\vec M_e$, with corresponding monomial state rewriting matrices $\vec R_d$ and $\vec R_e$~\cite{HawkesR04}. $\vec M_d$ represents all $D = \sum_{i = 0}^d {n\choose i}$ monomials of degree $d$ or less and $M_e$ represents all $E = \sum_{i = 0}^e {n\choose i}$ monomials of degree $e$ or less.  Equation \eqref{eq:12} is then rewritten as

\begin{equation} \label{eq:13}
    h(\vec R_d^t\cdot \vec M_d^0) + g(\vec R_e^t\cdot \vec M_e^0) = 0 \vec h\cdot \vec R_d^t\cdot \vec M_d^0 + \vec g(\vec z^t) \cdot \vec R_e^t\cdot \vec M_e^0 = 0.
\end{equation}

For the monomial state rewriting matrix $\vec R_d$, there exists a characteristic polynomial $p(x) = \sum_{i=0}^{D}p_ix_i$ so that

\begin{center}
    $\sum_{i=0}^{D}p_i \cdot \vec R_d^i = \vec 0$,
\end{center}
where $\vec0$ represents the all-zero matrix. The characteristic polynomial can be computed by the Theorem 1 and the steps presented in~\cite{HawkesR04} inspired by~\cite{Key76}.

We find the linear combination such that

\begin{center}
    $\sum_{i=0}^{D}p_i \cdot (\vec h \cdot \vec R_d^{t+i}\cdot \vec M_d^0+ \vec g(\vec z^{t+i}) \cdot \vec R_e^{t+i} \cdot \vec M_e^0) = 0$.
\end{center}

The equation can be further expanded as

\begin{center}
    $ \vec h \cdot \vec R_d^t \cdot \vec M_d^0 \cdot \sum_{i=0}^{D}p_i \cdot \vec R_d^i + \vec M_e^0 \cdot \sum_{i=0}^{D}p_i \cdot \vec g(\vec z^{t+i}) \cdot \vec R_e^{t+i} = 0$.
\end{center}

From \eqref{eq:13}, we have

\begin{center}
    $\vec M_e^0 \cdot \sum_{i=0}^{D}p_i \cdot \vec g(\vec z^{t+i}) \cdot \vec R_e^{t+i} = 0$.
\end{center}

The complexity of this step depends on the computation of the characteristic polynomial, which requires $c \cdot D(n(log n)^2 + (log_2D)^3)$ operations for small constant $c$.

Step 3: Substitution. The equation in step 2 is performed component-wise, so we write the equation for $t = 1, \ldots, E$. Then the Fast Fourier Transform (FFT) is applied to substitute the obtained keystream bits into the equations. The complexity of this step is $2DE(log_2E)$.

Step 4: Solving the equations. The last step is to solve the equations by linearization. The complexity of this step is $E^\omega$, where $\omega = 2.807$ is the exponent of the Gaussian reduction~\cite{RonjomH07}. 

Among all the 4 steps, the dominant complexity is in the Step 3. The complexity of this attack is $2ED(log_2E)$. The number of required keystream bits is $D+E-1$. Therefore, we can break Espresso cipher by using the standard algebraic attack with $2^{48.58}$ keystream bits and the computation complexity is $\mathcal{O}(2^{68.44})$.

\subsubsection{R\o njom-Helleseth Attack.}
We also consider the attack proposed by R\o njom-Helleseth~\cite{RonjomH07}. In this attack, the idea of expressing the output function in terms of monomials of initial state and finding the characteristic polynomial of the coefficient sequences of all monomials is also adapted. The main difference from the standard algebraic attack described in Section 5.1 is in the step of   solving the equations. This attack only needs to solve the linear part of the formed system of equations. The process of this attack includes:

Step 1: Pre-computation. From the output function, compute the characteristic polynomial $p(x) = \sum_{j=0}^{D-n}p_jx^j$ of the coefficient sequences of all monomial of degree $\geq 2$. The complexity of this step is $\mathcal{O}(D(log_2D)^3)$.

Step 2: Form a linear equation system. For each output keystream bit, we represent it as $z_t = f_t(x_0^0, \ldots, x_{n-1}^0)$. From the characteristic polynomial $p(x)$, we form a system of equations

\begin{center}
    $z^*_t = f^*_t(x_0^0, \ldots, x_{n-1}^0)$ for $t = 0, 1, \ldots, n-1$,
\end{center}
where 

\begin{center}
    $z^*_t = \sum_{j=0}^{D-n}p_jz_{t+j}$
\end{center}
and 
\begin{center}
    $f^*_t = \sum_{j=0}^{D-n}p_jf_{t+j}(x_0^0, \ldots, x_{n-1}^0)$.
\end{center}

The complexity of this step is determined by the calculation of $z^*_t$ and $f^*_t$, which is $\mathcal{O}(D)$.

Step 3: Solving the system of linear equations. The complexity of this step is $\mathcal{O}(n^\omega)$.

In conclusion, by using this attack, we break the Espresso cipher with computation complexity of $\mathcal{O}(2^{66.86})$, pre-computation complexity $\mathcal{O}(2^{84.97})$ and $2^{66.86}$ keystream bits. In order to resist the algebraic attack, the degree of the output function in the transformed version of the cipher should be at least 30 and the output function should have large algebraic immunity. 

\subsection{Overall Security Analysis}
Another powerful attack against LFSR filter generators is correlation attack. The basic idea is to derive linear approximation equations of the nonlinear output function to recover the initial state~\cite{BerbainGM06,Maximov06}. In 2010, R\o njom and Cid \cite{RonjomC10} investigated the nonlinear equivalence of LFSR-based stream ciphers. In the paper, they showed how to change the primitive root of the underlying finite field to obtain an equivalent filter generators. They also pointed out that current analysis of distance from a nonlinear function to the space of affine functions is incomplete with respect to LFSR-based stream ciphers. They redefine the nonlinearity of the LFSR filter function as the minimal distance between equivalent filter functions and affine functions, which implies that a correlation attack may be more successful on a weak equivalent generator. This observation has been proved by Canteaut and Rotella at FSE 2016 \cite{CanteautR16}. In their paper, they describe how to mount a fast correlation attack on equivalent LFSR filters. By following their result and Ronjom and Cid’s idea of changing the primitive root of the underlying finite field to obtain an equivalent filter generator, a fast correlation attack is a potential threat to the Espresso cipher. 

In the design of Espresso cipher, the security analysis is conducted on a NLFSR F instead of the original Galois NLFSR G. As the author claimed~\cite{DubrovaH17}, F is equivalent to the original Galois NLFSR G. However, according to our proposed transformation algorithm, the two NLFSRs F and G are equivalent only if the output function of F is changed accordingly. Therefore, the security analysis of the cipher is not conducted on the actual cipher. Whether it is resistant to chosen IV attacks, differential attacks and weak key attacks needs to be reassessed. Moreover, a known attack against Espresso cipher proposed in~\cite{WangL17} is a related key chosen IV attack. This attack is also not mounted on the original Galois NLFSR G but on another NLFSR based cipher denoted as Espresso-a. Nevertheless, the two ciphers would not generate same sequence when the output functions are the same. The output function of Espresso-a should be transformed by the proposed algorithm in Theorem \ref{UniGal}. Whether the Espresso cipher is weak against this attack or not should be reevaluated. 

The idea of using Galois NLFSRs constructed from maximum length LFSR to build stream ciphers can certainly optimise the hardware size and the throughput of the cipher. However, this innovative design methodology exposes weaknesses to existing cryptanalytic attacks. Therefore, this kind of design method should be avoided in the future. 

\section{Conclusion}
In this paper, we analyse the security of Espresso-like stream ciphers based on Galois NLFSRs constructed from maximum length LFSRs. Considering the problem in existing transformation algorithms, we develop the idea of compensating the output function and feedback functions in the cipher and propose a Fibonacci-to-Galois and a Galois-to-Fibonacci transformation algorithm. Then we customize an Uniform\_Galois-to-Fibonacci algorithm to transform the Galois NLFSR in Espresso-like cipher. The result shows that the transformed cipher is actually a linear filter generator based stream cipher. As an example, the Espresso cipher is transformed into a LFSR filter generator with a nonlinear output function. The transformed cipher is broken by the standard algebraic attack and the R\o njom-Helleseth attack. At last, we analyse the security of this kind stream cipher from the overall perspective and point out that this kind of design method should be avoided in the future.

\bibliographystyle{splncs04}
%

\end{document}